\newtheorem{thm}{Theorem}
\begin{document}

\title{Can One Hear the Shape of a Crystal?}
\author{Haina Wang}
\affiliation{Department of Chemistry, Princeton University, Princeton,
NJ 08544}

\author{Salvatore Torquato}
\affiliation{Department of Chemistry, Princeton University, Princeton,
NJ 08544}
\affiliation{Department of Physics, Princeton University, Princeton, NJ 08544}
\affiliation{Princeton Materials Institute, Princeton University, Princeton, NJ 08540}
\affiliation{Program in Applied and Computational Mathematics, Princeton, NJ 08544}
\thanks{
Email:
\href{mailto:torquato@electron.princeton.edu}{torquato@electron.princeton.edu}}

\begin{abstract}
Isospectrality is a general fundamental concept often involving whether various operators can have identical spectra, i.e., the same set of eigenvalues. 
In the context of the Laplacian operator, the famous question ``Can one hear the shape of a drum?'' concerns whether different shaped drums can have the same vibrational modes.
The isospectrality of a lattice in $d$-dimensional Euclidean space $\mathbb{R}^d$ is a tantamount to whether it is uniquely determined by its theta series, i.e., the radial distribution function $g_2(r)$.
While much is known about the isospectrality of Bravais lattices across dimensions, little is known about this question of more general crystal (periodic) structures with an $n$-particle basis ($n \ge 2$). 
Here, we ask, What is $n_{\text{min}}(d)$, the minimum value of $n$ for inequivalent (i.e., unrelated by isometric symmetries) crystals with the same theta function in space dimension $d$? 
To answer these questions, we use rigorous methods as well as a precise numerical algorithm that enables us to determine the minimum multi-particle basis of inequivalent isospectral crystals.
Our algorithm identifies isospectral 4-, 3- and 2-particle bases in one, two and three spatial dimensions, respectively. 
For many of these isospectral crystals, we rigorously show that they indeed possess identical $g_2(r)$ up to infinite $r$.
Based on our analyses, we conjecture that $n_{\text{min}}(d) = 4, 3, 2$ for $d = 1, 2, 3$, respectively.
The identification of isospectral crystals enables one to study the degeneracy of the ground-state under the action of isotropic pair potentials.
Indeed, using inverse statistical-mechanical techniques, we find an isotropic pair potential whose low-temperature configurations in two dimensions obtained via simulated annealing can lead to both of two isospectral crystal structures with $n = 3$, the proportion of which can be controlled by the cooling rate.
Our findings provide general insights into the structural and ground-state degeneracies of crystal structures  as determined
by radial pair information.

\end{abstract}
\maketitle

\newpage
\section{Introduction}
\label{sec:intro}
Isospectrality is a general fundamental concept often involving whether various operators can have identical spectra, (i.e., the same set of eigenvalues counting multiplicities \cite{Da95}), and arises in a broad range of contexts, including classical mechanics \cite{Gi10, Ka66b}, photonics \cite{Yu16, Yu21}, quantum chaos \cite{Gi10, Ri81}, graph theory \cite{Bi93}, dynamical systems \cite{Ga04c} and computational chemistry \cite{He75}. 
Given a linear operator and system-specific boundary conditions, two Riemannian manifolds are said to be isospectral if the multiset (i.e., set with multiplicities counted) of their eigenvalues of the operator coincide.
In the context of the Laplacian operator $\Delta$, Kac \cite{Ka66b} posed this problem as the famous question ``Can one hear the shape of a drum?'', i.e., whether different shaped drums can have the same vibrational modes.
It has been established that this is not the case, as examples of isospectral drums have been identified across dimensions; see Ref. \cite{Mi64, Go92, Bu94}.

The spectrum of the Laplacian operator of the flat torus (which is a quotient of Euclidean space $\mathbb{R}^d$  by a Bravais lattice $\Lambda$) is directly related to the \textit{theta series} of the lattice, which encodes radial pair distance coordination information that exactly determines the radial distribution function $g_2(r)$ \cite{Mi64}; see Sec. \ref{sec:lattice} for precise definitions.
A Bravais lattice in $\mathbb{R}^d$ is a subgroup consisting of the integer linear combinations of vectors that constitute a basis for $\mathbb{R}^d$ \cite{To10d}.
In a Bravais lattice $\Lambda$, the space $\mathbb{R}^d$ can be geometrically divided into identical regions $F$ called fundamental cells, each of which contains one point.
Henceforth, we use the term lattice to refer to a Bravais lattice.
The isospectrality of a lattice in $\mathbb{R}^d$ is tantamount to whether it is uniquely determined by its theta series. 
In 1964, Milnor \cite{Mi64} first identified examples of two 16-dimensional (16D) flat tori that possess identical spectra but correspond to different lattices. 
Since then, isospectral lattices have been identified in as low as four dimensions \cite{Sc90}.
Furthermore, it has been established by Schiemann \cite{Sc90b} that isospectral lattices cannot exist in $d \leq 3$ dimensions.
(For a simple proof in the 2D case, see Ref. \cite{Co97}.)
However, while much is known about isospectrality of lattices, little is known about the ``isospectrality'' of more general crystal (periodic) structures, i.e., periodic point patterns with an $n$-particle basis with $n\geq 2$.

It is noteworthy that the isospectral problem for crystals is closely related to the structural degeneracy problem of pair statistics, which frequently emerge in chemical physics and statistical mechanics \cite{Le75a, Cr03, Cos04, Co07, To06b, Ji10a, St19}.
It is known that for a $d$-dimensional many-body system, one- and two-body correlation functions are insufficient to uniquely determine the higher-body correlation functions $g_3,g_4,...$ \cite{To06b}.
That is, systems with identical $g_2(r)$ for all $r$, or indistinguishable pair functions within numerical noise in simulations and experiments, can possess distinctly different higher-body correlation functions.
Such structural degeneracy indicates that there exist ambiguity in solutions to inverse statistical mechanical problems, in which one aims to infer inter-particle interactions from pair statistics alone \cite{Wa20}. 
The degeneracy of pair functions has been exactly shown for certain finite systems with a small number of particles $N\leq 30$ in two and three dimensions under rigid boundary conditions \cite{Ji10a, St19}.
Such degeneracy has also been numerically identified for disordered equilibrium and nonequilibrium systems in the thermodynamic limit \cite{Wa23}.
On the other hand, the absence of ground-state degeneracy, i.e., the uniqueness of the global energy minimum, can be rigorously proved for certain point configurations on a sphere under the action of isotropic pair potentials \cite{Co07}.
However, as noted above, structural degeneracy of pair functions for non-Bravais crystals have not been studied for $d \leq 3$. 

In this work, we ask the question, What is $n_{\text{min}}(d)$, the minimum value of $n$ for inequivalent crystals with the same theta series or, equivalently, radial distribution function, in space dimension $d$? 
Two crystals are said to be {\it equivalent} if one can be transformed into the other by isometric symmetry operations, including translations, rotations, reflections, and inversion with respect to a center.
Specifically, to tackle this question, we develop a precise numerical algorithm that searches for inequivalent isospectral crystals with a multi-particle basis in $d = 1, 2, 3$ dimensions.
We characterize the symmetry (i.e., crystal systems and space groups) of any identified isospectral crystals.
Furthermore, to investigate how isospectral crystals can arise in physical many-body systems, we use inverse statistical mechanical methods to search for interparticle interactions that yield degenerate  ground-state manifolds.

Our algorithm identifies inequivalent isospectral 4-, 3- and 2-particle bases in one, two and three dimensions, respectively. 
In one dimension, no isospectral crystal with $n\leq 3$ are identified by our algorithm.
That is, the minimum $n$ for inequivalent isospectral crystals is $n_{\text{min}}(1) = 4$.
We provide a straightforward proof for why $n_{\text{min}}(1)$ must be greater than 3.
This demonstrates the efficiency and reliability of the algorithm.
In two dimensions, no isospectral crystals with $n\leq 2$ are identified, which strongly suggests that $n_{\text{min}}(2) = 3$.
The algorithm also does not identify 3D Bravais lattice, consistent with the result proved by Schiemann \cite{Sc90b}, i.e., $n_{\text{min}}(3) = 2$.
Taken together, we conjecture that $n_{\text{min}}(d) = 4, 3, 2$ for $d = 1, 2, 3$, respectively.
Furthermore, we show that certain 2D isospectral crystals with 3-particle bases can possess high crystal geometry, including configurations with reflection planes, rotation axes, and inversion centers.
We also introduce a theorem that enables one to rigorously conclude that many isospectral crystals identified via our numerical algorithm indeed possess identical theta series up to infinite pair distances.

Targeting inequivalent isospectral crystals under the action of isotropic pair potentials enables us to study systems that possess degenerate crystalline ground states.
To study the degenerate ground-state manifolds associated with isotropic pair potentials, we use an inverse technique developed in Ref. \cite{Re06a}, which determines pair interactions that yield ground states with targeted radial distribution functions.
Via this method, we find an isotropic pair potential whose low-temperature configurations in two dimensions obtained via simulated annealing can lead to both of two isospectral crystal structures with 3-particle bases, one has cmm symmetry and contains linear chains of particles, while the other has p3m1 symmetry and contains equilateral triangular clusters.
We show that with sufficiently slow cooling rates, the two structures occur with approximately equal probabilities, but the structure with linear chains occurs more frequently with faster cooling. 

We begin by providing basic definitions and background in Sec. \ref{sec:def}.
Section \ref{sec:alg} provides a description our algorithm to search for inequivalent isospectral crystals.
Section \ref{sec:res} presents results for isospectral crystals for in one, two and three dimensions. 
Section \ref{sec:ground} describes the results for a many-body system under a pair potential that can lead to degenerate crystalline ground states, identified via the aforementioned inverse statistical-mechanical technique \cite{Re06a}.
We provide concluding remarks in Sec. \ref{sec:conc}. 

\section{Definitions and Preliminaries}
\label{sec:def}
In this section, we introduce some preliminary concepts and background that are crucial in our study of isospectral crystals with $n$-particle bases.

\subsection{Lattices, Crystals and Theta Series}
\label{sec:lattice}
A many-body system in $\mathbb{R}^d$ is completely statistically characterized by the $n$-particle probability density functions $\rho_n(\mathbf{r}_1,...,\mathbf{r}_n)$ for all $n\geq 1$ \cite{Ha86}.
In the case of statistically homogeneous systems, one has $\rho_1(\mathbf{r}_1)=\rho$ and $\rho_2(\mathbf{r}_1,\mathbf{r}_2)=\rho^2 g_2(\mathbf{r})$, where $\rho$ is the number density in the thermodynamic limit, $g_2(\mathbf{r})$ is the pair correlation function, and $\mathbf{r}=\mathbf{r}_2-\mathbf{r}_1$.
The \textit{radial distribution function} $g_2(r)$ is the angular-averaged pair correlation function, and it is defined such that $\rho s_1(r)g_2(r)dr$ gives the conditional probability of finding a particle center in the spherical shell of volume $s_1(r)dr$, given that there is another particle at the origin, where 
\begin{equation}
    s_1(r) = 2\pi^{d/2}r^{d-1}/\Gamma(d/2)
\end{equation}
is the surface area of a $d$-dimension sphere of radius $r$ \cite{To02a}.
Note that the functions $g_n(\mathbf{r}_1,...,\mathbf{r}_n) = \rho_n(\mathbf{r}_1,...,\mathbf{r}_n)/\rho^n$ apply to both disordered and ordered systems.

Beyond the pair correlation functions, we also study the three- and higher-body statistics $g_3, g_4, \dots$ corresponding to distributions of triangles, tetrahedra, etc., formed by points in a many-body system. 
We are particularly interested in the distribution of bond angles $\theta$ between pair displacement vectors of lengths on the order of $\rho^{-1/d}$. 
Thus, we express $g_3$ in terms of $\theta$, i.e.,
\begin{equation}
    g_3(r_1,r_2,\theta)= \frac{\rho_3\left(r_1,r_2,\sqrt{r_1^2+r_2^2-2ab\cos(\theta)}\right)}{\rho^3},
    \label{g3}
\end{equation}
where $\rho_3(r_1,r_2,r_3)$ is probability density of finding three particles that form a triangle with side lengths $r_1,r_2$ and $r_3$.

We recall from Sec. \ref{sec:intro} that a (Bravais) \textit{lattice} is a subgroup consisting of the integer linear combinations of \textit{basis vectors} $\mathbf{B} = \{\mathbf{a}_1, \dots, \mathbf{a}_d\}$ in $\mathbb{R}^d$.
In a lattice $\Lambda$, $\mathbb{R}^d$ can be divided into identical \textit{fundamental cells} (where $F$ denotes a fundamental cell), each of which containing just one point in $\Lambda$, whose position is given by the vector
\begin{equation}
    \mathcal{L} = n_1\mathbf{a}_1 + \dots n_d\mathbf{a}_d,
\end{equation}
where $n_i$ span all integers for $1 \leq i \leq d$.

A \textit{crystal} with an $n$-particle basis is a periodic point pattern obtained by placing a fixed configuration of $n \geq 1$ points within one fundamental cell $F$ of a lattice $\Lambda$, which is then periodically replicated. 
Note that a lattice is a crystal with $n = 1$.
A crystal $C$ is completely characterized by its underlying lattice $\Lambda_C$ and the position vectors of particles in a fundamental cell, denoted as $\mathbf{P}_C = \{\mathbf{p}_1, \dots, \mathbf{p}_n\}$, where
\begin{equation}
    \mathbf{p}_j = \nu_{j,1}\mathbf{a}_1 + \dots + \nu_{j,n}\mathbf{a}_d, \quad 0 \leq \nu_{j, i} < 1,
    \label{pj}
\end{equation}
for all integers $1 \leq i \leq d$ and $1 \leq j \leq n$.
Any position vector of point $\mathbf{p}\in C$ can be uniquely decomposed as $\mathbf{p} = \mathcal{L} + \mathbf{p}_j$, where $\mathcal{L} \in \Lambda_C$ and $\mathbf{p}_j \in \mathbf{P}_C$.
In this work, we assume that $\mathbf{p}_1 = \mathbf{0}$ in any crystal $C$, i.e., there exist particles at all points in $\Lambda_C$.

The \textit{theta series} for a crystal $C$ with respect to a point $\mathbf{p}\in C$ is defined as \cite{Co93}
\begin{equation}
    \Theta_{C}(q, \mathbf{p}) = \sum_{\mathbf{p'} \in C\setminus
    \{\mathbf{p}\}} q^{|\mathbf{p}' - \mathbf{p}|^2} = 1 + \sum_{m=1}^\infty b_m q^{X_m}
    \label{theta_def}
\end{equation}
where $q$ is a complex variable, $X_m$ is the squared norm of the vector to the $m$th nearest point from $\mathbf{p}$, and $b_m$ is the number of vectors of squared norm $X_m$.
The theta series encodes radial coordination information whereby $b_m$ is the number of particles at the squared norm $X_m$ from the particle at $\mathbf{p} \in C$ and exactly determines the radial distribution function $g_2(r)$  \footnote{For general crystals in which each particle does not have exactly the same radial coordination structure, the theta series (\ref{theta_def}) should be interpreted in an average sense, i.e., an average over each particle type. The corresponding radial distribution function $g_2(r)$ is obtained by averaging over each particle type $\mathbf{p}_j \in \mathbf{P}_C$.}.
Note that the partial sum up to $m=M$ provides the total number of particles in the crystal within distance $\sqrt{X_M}$ from $\mathbf{p}$ when $q=1$, i.e., the cumulative coordination number.
In the Appendix, we provide the first several terms of the theta series for some well-known 2D Bravais lattices ($n = 1$) and non-Bravais crystals ($n \geq 2$).
Evidently, for a lattice $\Lambda$, the coefficients of the theta series can be alternatively expressed as a multiset of pair distances:
\begin{equation}
    \Theta_\Lambda = \{(\sqrt{X_m}, b_m): m\in \mathbb{Z}^+\} = \{(r = |\mathcal{L}|, m_r): \mathcal{L}\in\Lambda\},
    \label{thetalambda}
\end{equation}
where $m_r$ is the multiplicity of the vector norm $r = |\mathcal{L}|$.
In this work, we generalize Eq. (\ref{thetalambda}) to a crystal $C$, defined as the multiset of all pair distances in $C$ such that one of the two particles is in $\mathbf{P}_C$; i.e.,
\begin{equation}
\begin{split}
   \Theta_C = \Theta_{\Lambda_C} \cup
   \{&(r = |\mathbf{p}_j - \mathbf{p}_{j'} + \mathcal{L}|, m_r):\\
   &\mathbf{p}_j, \mathbf{p}_{j'}\in \mathbf{P}_C, j\neq j', \mathcal{L}\in \Lambda_C\},
    \label{thetaC} 
\end{split}
\end{equation}
where, as above, $m_r$ is the multiplicity of the pair distance $r = |\mathbf{p}_j - \mathbf{p}_{j'} + \mathcal{L}|$.
We call two crystals $C_1, C_2$ \textit{isospectral} if they have identical theta series, or equivalently, if their corresponding $g_2(r)$ agree for all $r$.
Importantly, for two crystals to be isospectral, both the sets of pair distances and the multiplicities of each pair distance must match.

Two crystals $C_1, C_2$ are said to be \textit{equivalent} if one can be transformed to another via an \textit{isometry}, i.e., $C_1 = T(C_2)$ for some $T\in E(d)$, where $E(d)$ is the $d$-dimensional \textit{Euclidean group}.
The elements of $E(d)$ are transformations on $\mathbb{R}^d$ that preserve the Euclidean distance between any two points, including all translations, rotations, reflections, inversions with respect to a center, as well as all arbitrary finite compositions of them. 
Note that by this definition, enantiomeric chiral crystals are also considered equivalent as they are related by a reflection.
Equivalent crystals are trivially isospectral because the pair distances are preserved by the associated isometry.
In this work, we are interested in determining $n_{\text{min}}(d)$, the minimum value of $n$ for inequivalent isospectral crystals in $\mathbb{R}^d$. 

\subsection{Distance Metrics}
To study the isospectrality of crystals, it is useful to define a ``distance'' metric between the radial distribution functions for two crystals $C_1$ and $C_2$ at the same number density with a common underlying fundamental cell.
A metric that has been fruitfully employed in our previous works \cite{Wa20, To22} is given by the following $L_2$-norm distance between $g_2(r)$ for the two crystals: 
\begin{equation}
    D_{g_2}(C_1, C_2) = \rho\int_0^R [g_2^{(1)}(r) - g_2^{(2)}(r)]^2 s_1(r) d r,
    \label{dg2}
\end{equation}
where $g_2^{(i)}(r)$ is the radial distribution function of crystal $C_i$, and $R$ is an upper-cutoff pair distance, set to be twice the longest diagonal of the fundamental cell.
This choice of $R$ imposes the constraint that if $D_{g_2}(C_1, C_2) = 0$, then $C_1$ and $C_2$ must match at at least $4^d n^2$ pair distances, which well exceeds the number of degrees of freedom $n_F$ [see Eq. (\ref{dof})] for $d$-dimensional $n$-particle bases considered in this work.
Thus, it is reasonable to assume that crystals with matching $g_2(r)$ on $(0, R)$ also match at pair distances beyond this range.

Because numerical $g_2(r)$ functions are computed form binned histograms of pair distances, crystals with $D_{g_2}(C_1, C_2) = 0$ may not be exactly isospectral, because the pair distances in each crystal can vary within the range of a bin size and still yield the same binned $g_2(r)$.
Thus, it is useful to define another metric $D_\Theta(C_1, C_2)$ that measures the ``distance'' between the theta series of $C_1$ and $C_2$ by involving the exact pair distances:
\begin{equation}
    D_{\Theta}(C_1, C_2) = \sum_{j = 1}^M \left(r_j^{(1)} - r_j^{(2)}\right)^2,
    \label{Dtheta}
\end{equation}
where $r_j^{(i)}$ is the shortest $j$-th smallest pair distance in crystal $C_i$, $M$ is a positive integer, set to be $5^d n^2$ in this work. 
This value of $M$ is chosen because we intend to compare the pair distances up to twice the length of the longest diagonal of the fundamental cell. 
We thus consider the region consisting of $4^d=4\times4\times \cdots \times 4$ fundamental cells centered at the origin.
The pair distances between particles in this region is a subset of $\Theta_C$ given by
\begin{equation}
\begin{split}
    \theta_C &= \{(r = |n_1\mathbf{a}_1 + n_d\mathbf{a}_d|, m_r): n_1, \dots, n_d \in \{0, \pm1, \pm2\}\}\\
    &\cup\{(r = |\mathbf{p}_j-\mathbf{p}_j'+n_1\mathbf{a}_1 + n_d\mathbf{a}_d|, m_r): \\ 
    &j, j' = 1,\dots, n; j \neq j'; n_1, \dots, n_d \in \{0, \pm 1, \pm2\}\}.
\end{split}
\end{equation}
which contains $5^d(1 + n(n - 1)) \leq 5^dn^2$ pair distances. 
In all cases we study, we find that if two crystals match their pair distances within twice the longest diagonal of the fundamental cell, their pair distances continue to match at larger $r$.

To measure the degree to which two crystal configurations are geometrically different up to isometry, we define a geometric distance metric $\xi$ that vanishes if and only if $C_1$ and $C_2$ are equivalent:
\begin{equation}
    \xi(C_1, C_2) = \rho^{1/d} \min_{T\in E(d)}d_{\text{Ch}}(T(C_1), C_2).
    \label{xi}
\end{equation}
For crystals $C_1, C_2$ with respective fundamental cells $F_1, F_2$ and underlying lattices $\Lambda_{C_1}, \Lambda_{C_2}$, generated by the corresponding sets of basis vectors $\mathbf{B}_{C_1}$ and $\mathbf{B}_{C_2}$, we define $d_{\text{Ch}}(C_1, C_2)$ analogously to the Chamfer matching distance between two finite sets of points \cite{Va00}: 
\begin{equation}
\begin{split}
    d_{\text{Ch}}(C_1, C_2) &= \sum_{\mathbf{a}\in \mathbf{P}_{C_1}}d_{\mathbf{a}C_2} + \sum_{\mathbf{b}\in \mathbf{P}_{C_2}}d_{\mathbf{b}C_1} \\
    &+\sum_{\mathbf{p}\in \mathbf{B}_{C_1}}d_{\mathbf{p}\Lambda_{C_2}} + \sum_{\mathbf{q}\in \mathbf{B}_{C_2}}d_{\mathbf{q}\Lambda_{C_1}}, 
    \label{chamfer}
\end{split}
\end{equation}
where
\begin{subequations}
\begin{equation}
    d_{\mathbf{a}C_2} = \min_{\mathbf{b}\in \mathbf{P}_{C_2}, \mathcal{L}_2\in\Lambda_{C_2}}|\mathbf{a} - \mathbf{b} + \mathcal{L}_2|,
    \label{dac2}
\end{equation}
\begin{equation}
    d_{\mathbf{b}C_1} = \min_{\mathbf{a}\in \mathbf{P}_{C_1}, \mathcal{L}_1\in\Lambda_{C_1}}|\mathbf{b} - \mathbf{a} + \mathcal{L}_1|,
    \label{dbc1}
\end{equation}
\begin{equation}
    d_{\mathbf{p}\Lambda_{C_2}} = \min_{ \mathcal{L}_2\in\Lambda_{C_2}}|\mathbf{p} - \mathcal{L}_2|,
\end{equation}
\begin{equation}
    d_{\mathbf{q}\Lambda_{C_1}} = \min_{ \mathcal{L}_1\in\Lambda_{C_1}}|\mathbf{q} - \mathcal{L}_1|.
\end{equation}
\end{subequations}
The first two terms of Eq. (\ref{chamfer}) measure the degree to which the positions of points in the interior of fundamental cells $F_1$ and $F_2$ associated with crystals $C_1$ and $C_2$ match each other.
For example, $d_{\mathbf{a}C_2}$ (\ref{dac2}) represents the minimum distance from $\mathbf{a}\in \mathbf{P}_{C_1}$ to any of $\mathbf{b}$’s images in the crystal $C_2$. 
On the other hand, the third and fourth terms of Eq. (\ref{chamfer}) measure the degree to which $F_1$ and $F_2$ align with each other, i.e., the sum of these terms vanish if and only if $F_1$ and $F_2$ are identical regions in $\mathbb{R}^d$. 

\section{Algorithm to Search for Inequivalent Isospectral Crystals}
\label{sec:alg}
Here, we describe our precise algorithm that enables us to numerically determine $n_{\text{min}}(d)$ for $d = 1, 2, 3$.
Our algorithm is designed to tackle the isospectrality problem of crystals as a problem of multi-objective optimization.

The goal of our algorithm is to search for pairs of inequivalent isospectral crystals $C_1$, $C_2$ with $n$-particle bases.
For simplicity, we set the number density $\rho = 1$ for all crystals and consider only cases such that $C_1$ and $C_2$ have identical basis vectors associated with the fundamental cells, i.e., the two crystals differ only in the position vectors of the particles in the interior of the fundamental cell $\mathbf{p}_2^{(i)}, \dots, \mathbf{p}_n^{(i)}$, where the superscript $(i)$ refers to crystal $C_i$.
In general, we allow the fundamental cells to deform, such that the basis vectors vary while preserving the volume of the fundamental cell $|F|=n$. Thus, the optimization parameters are the basis vectors $\mathbf{a}_i, \dots, \mathbf{a}_d$, as well as the position vectors of the interior points for both crystals $\mathbf{p}_2^{(i)}, \dots, \mathbf{p}_n^{(i)}$ for $(i) = (1), (2)$.
The total number of degrees of freedom is given by 
\begin{equation}
    n_F = d^2 + 2d(n - 1) - \frac{d (d - 1)}{2} - 1,
    \label{dof}
\end{equation}
where the first term corresponds to the scalar components of the $d$ basis vectors in $\mathbb{R}^d$, the second term corresponds to the scalar components of $2(n-1)$ interior particles, the third term comes from the rotation degrees of freedom of the fundamental cell, and the final term comes from the constraint that $\rho = 1$.
In the Appendix, we describe our specific methods to eliminate the $d(d-1)/2 + 1$ degrees of freedom due to rotations and scaling, thereby transforming the vectors $\mathbf{a}_1, \dots, \mathbf{a}_d$ and $\mathbf{p}_2^{(i)}, \dots, \mathbf{p}_n^{(i)}$ into $n_F$ free scalar parameters subject to optimization.
To determine $n_{\text{min}}(d)$, we perform the ``full'' optimization with deformable fundamental cells.
However, we also perform additional optimization with fixed fundamental cells to search for isospectral crystals with high symmetry, such as 2D crystals with square, hexagonal and rectangular fundamental cells, as well as 3D crystals with cubic and tetragonal fundamental cells.
Specifically, in fixed-cell simulations, we fix the basis vectors $\mathbf{a}_1, \dots, \mathbf{a}_d$ and vary only the positions of the interior particles for both crystals, thereby obtaining pairs of isospectral crystals with different symmetries.

To enforce isospectrality on $C_1$ and $C_2$, one must attempt to decrease the metric $D_{g_2}(C_1, C_2)$ (\ref{dg2}) that measures the distance between the the radial distribution functions of the two crystals.
The integral in (\ref{dg2}) is computed as a Riemann sum of the integrand and hence $g_2^{(i)}(r)$ is computed from a histogram of pair distances with bin size $0.1\rho^{-1/d}$.

As expected, we find that simulated annealing procedures whose sole objective is to minimize $D_{g_2}(C_1, C_2)$ invariably yield equivalent crystals, which are trivially isospectral.
This difficulty to find inequivalent isospectral cases is due to the fact that equivalent crystals occupy a much larger region in the parameter space than inequivalent isospectral crystals.
To tackle this challenge, we require that while $D_{g_2}(C_1, C_2)$ decreases, the algorithm must attempt to increase the geometric distance metric $\xi(C_1, C_2)$ between the two crystals (\ref{xi}).

Because the group $E(d)$ contains infinitely many translations and rotations, it is computationally expensive to compute Eq. (\ref{xi}) by minimizing over all $T\in E(d)$.
Thus, we consider only the set $E'(d) \subseteq E(d)$ consisting of the transformations $T$ such that $T(C_1)$ and $C_2$ have identical partition of $\mathbb{R}^d$ by the fundamental cells and that $\mathbf{p} = \mathbf{p}'$ for some $\mathbf{p}\in T(C_1), \mathbf{p}' \in C_2$.
Qualitatively, the elements in $E'(d)$ consist of transformations $T$ that yield small values of the Chamfer matching distance $d_{\text{Ch}}(T(C_1), C_2)$, because $T(C_1)$ and $C_2$ match exactly at at least one point per fundamental cell.
The elements of $E'(d)$ are exactly the following transformations:
\begin{itemize}
    \item The $n^2$ translations by vectors $\mathbf{a}-\mathbf{b}$, where $\mathbf{a}\in \mathbf{P}_{C_1}$ and $\mathbf{b}\in \mathbf{P}_{C_2}$. 
    We denote by $\mathbf{T}_1$ the set of such transformations.
    \item The transformations in the point group $\mathbf{T}_2$ associated with the symmetry of the fundamental cell. 
    For example, for a 2D rectangular fundamental cell, the corresponding transformations are the inversion with respect to the center of the fundamental cell, rotation by $\pi$, and the two reflections along the lines of symmetry of the fundamental cell.
    \item The compositions $T_2\circ T_1$, where $T_1 \in \mathbf{T}_1$ and $T_2\in \mathbf{T}_2$.
\end{itemize}
For any given fixed fundamental cell, $E'(d)$ is a finite set.
However, when a deformable fundamental cell is used, the cell can attain shapes that closely match the symmetries of any $d$-dimensional point group during the optimization procedure.
Thus, one must include in $\mathbf{T}_2$ the symmetry operations in all point groups in $\mathbb{R}^d$, which again contains an infinite number of rotations.
To optimize over rotations, we use the multi-start Broyden–Fletcher–Goldfarb–Shanno (BFGS) algorithm \cite{Liu89} starting from a grid of initial rotation angles on $[0, 2\pi)$ with $5^\circ$ intervals for 2D crystals, and 100 randomly generated 3D rotation matrices based on the Haar measure \cite{St80} for 3D crystals.
BFGS is an efficient quasi-Newtonian optimization technique to find deep local minima by incorporating both the gradient and an iteratively improved approximant of the Hessian \cite{Liu89}.
In our implementation of the algorithm, $\xi(C_1, C_2)$ is computed by replacing $E(d)$ in Eq. (\ref{xi}) with $E'(d)$. 
Note that with this simplification, $\xi(C_1, C_2)$ still satisfies the desired property that it vanishes if and only if $C_1$ and $C_2$ are equivalent.

To simultaneously decrease $D_{g_2}(C_1, C_2)$ and increase $\xi(C_1, C_2)$, we perform simulated annealing optimization on the following scalar objective function
\begin{equation}
    \Psi(C_1, C_2) = D_{g_2}(C_1, C_2) - c\exp[-\alpha D_{g_2}(C_1, C_2)]\xi(C_1, C_2),
    \label{Psi}
\end{equation}
where $c$ and $\alpha$ are positive parameters.
The form of the second term in $\Psi(C_1, C_2)$ is designed with the consideration that during the simulated annealing procedure, it is desirable to first decrease the difference in $g_2(r)$ of the two crystals at higher temperatures.
As $g_2(r)$'s of the crystals become closer, we give the metric $\xi(C_1,C_2)$ a larger weight to encourage the formation of geometrically different crystals. 
Thus, we include a weight factor $\exp[-\alpha D_{g_2}(C_1, C_2)]$ in the second term.

\begin{figure*}[htp]
    \centering
    \subfloat[]{\includegraphics[width = 60mm]{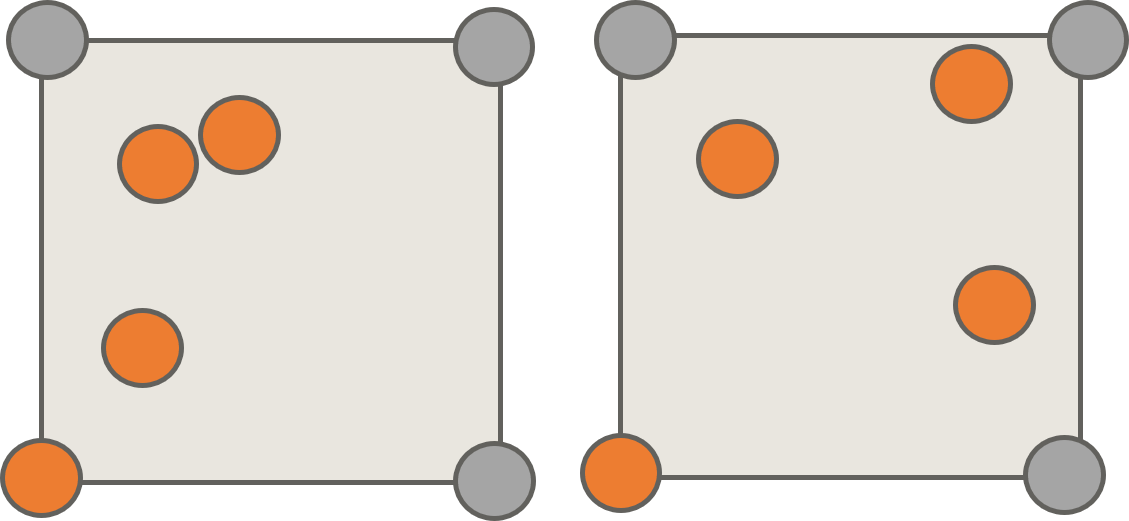}}
    \hspace{10mm}
    \subfloat[]{\includegraphics[width = 60mm]{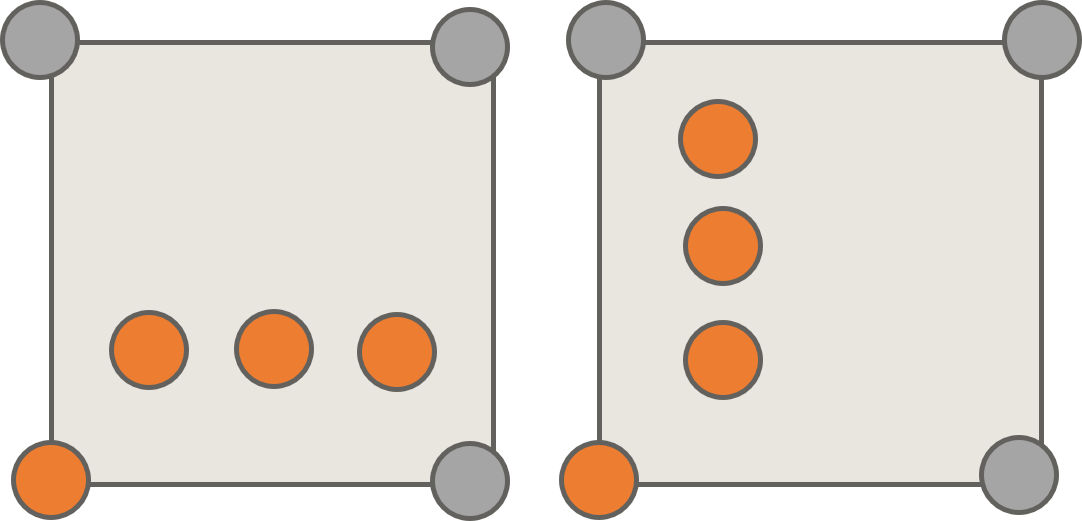}}
    
    \subfloat[]{\includegraphics[width = 60mm, trim={0 -1cm 0 0},clip]{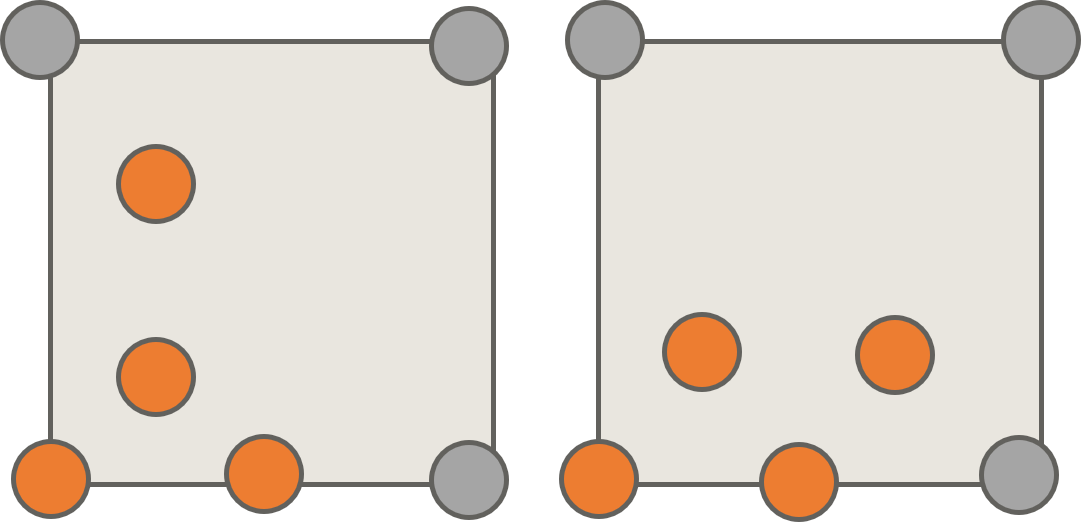}}
    \hspace{10mm}
    \subfloat[]{\includegraphics[width = 60mm]{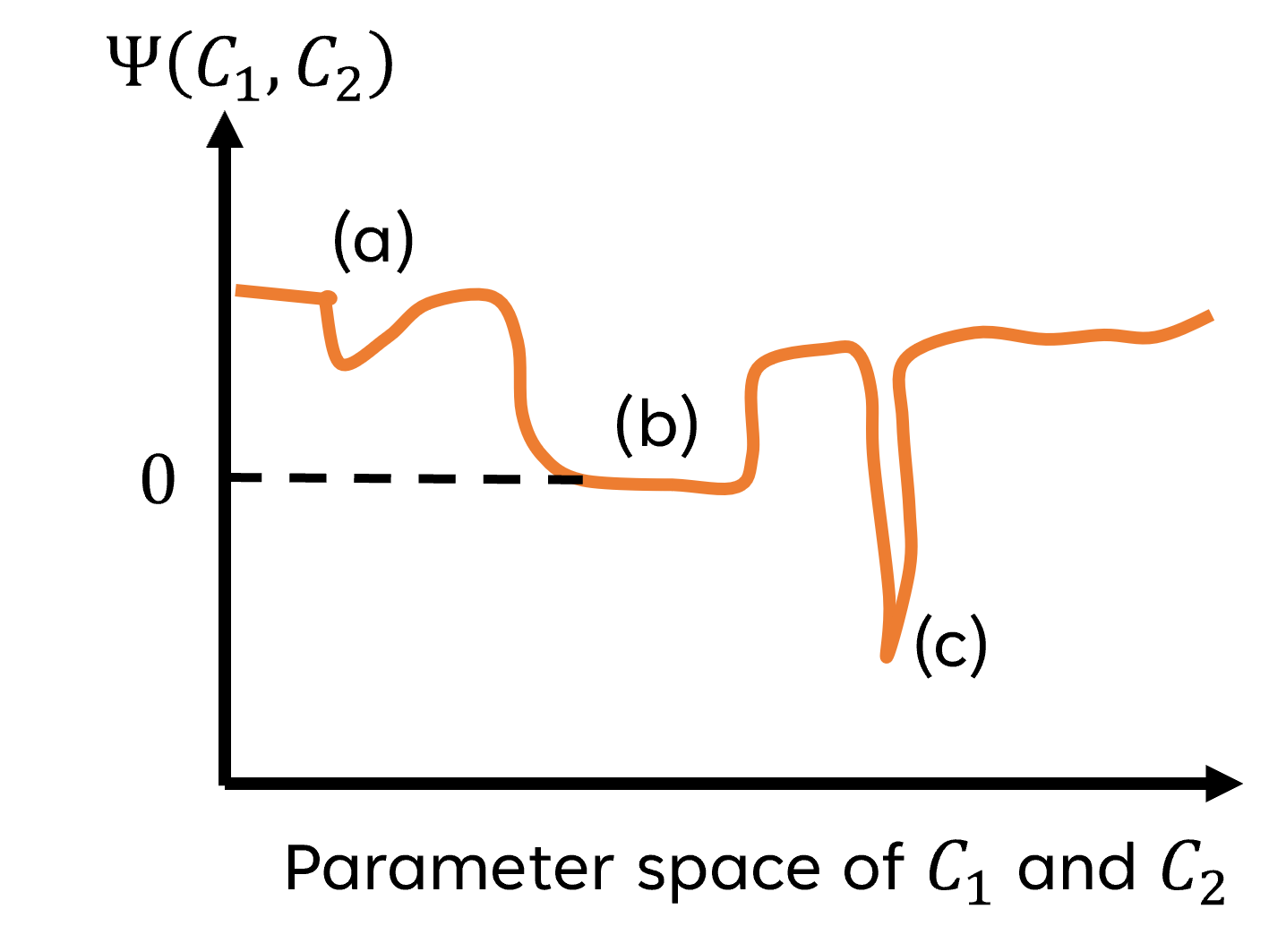}}
    \caption{Illustration showing the objective of the algorithm to search for pairs of inequivalent isospectral crystals in the case $d = 2, n = 4$.
    The orange particles are particles in the fundamental cell $\mathbf{p}_1, \dots \mathbf{p}_4$ [see Eq. (\ref{pj})], and the gray particles are images of $\mathbf{p}_1 = \mathbf{0}$.
    (a) A case with high $D_{g_2}$ and high $\xi$. 
    (b) A case with low $D_{g_2}$ and low $\xi$, which yields crystal structures that are approximately equivalent.
    (c) A case with low $D_{g_2}$ and high $\xi$, which yields crystals that are close to being inequivalent isospectral. 
    The optimization algorithm aims to search for cases (c) by minimizing the objective function $\Psi$ that involves both $D_{g_2}$ and $\xi$ and attains its deep local minima at cases of inequivalent isospectral crystals.
    (d) Schematic of the ``energy landscape'' of the optimization objective $\Psi$ as a function of the crystal parameters of $C_1$ and $C_2$. 
    The cases (a)--(c) are indicated as local minima in the energy landscape.}
    \label{fig:Algo}
\end{figure*}

Figure \ref{fig:Algo} schematically illustrates the idea behind the algorithm for the special case in $\mathbb{R}^2$ with $n = 4$.
The local minima of $\Psi$ are of three types, illustrated in Figs. \ref{fig:Algo}(a)--(c).
Type (a) minima have high $D_{g_2}$ ($\sim 10^1$) and high $\xi$ ($\sim 2n$), corresponding to crystals that are not isospectral and distinctly different in geometry.
Type (b) minima have low $D_{g_2}$ ($\sim 10^{-1}$) and low $\xi$ ($\sim 10^{-2}$) and correspond to nearly equivalent crystals.
Type (c) minima have low $D_{g_2}$ ($\sim 10^{-1}$) and high $\xi$ ($\sim 10^0$), corresponding to inequivalent isospectral crystals.
The ordering of values of $\Psi$ at these three types of local minima depends on $c$ and $\alpha$. 
For example, in the limit of large $c$ and small $\alpha$, type (a) minima attain lower $\Psi$ than type (c) minima.
To search for inequivalent isospectral crystals,  $c$ and $\alpha$ in (\ref{Psi}) must be tuned such that $\Psi$ attains its deep local minima at cases (c), as illustrated in Fig. \ref{fig:Algo}(d).
In practice, for crystals at unit density with deformable fundamental cells, we find that the choices $c = 30, \alpha = 0.1$ enable us to identify inequivalent isospectral crystals for $d = 1, 2, 3$.
With fixed fundamental cells, we use $c = 10, \alpha = 0.5$.
Different simulated annealing trajectories usually attain different deep local minima for $\Psi$, corresponding to different pairs of inequivalent isospectral crystals, if they exist for given $n$ and $d$.
Since our goal is simply to find  inequivalent isospectral crystals, we do not require the algorithm to find the true global minimum of $\Psi$.
The simulated annealing procedure terminates when a case is found such that $D_{g_2}(C_1, C_2) = 0$ and $\xi(C_1, C_2) > 0.5$.

As noted in Sec. \ref{sec:def}, the optimization procedure described above yields approximately, but not exactly, isospectral crystals, due to the finite bin size of numerical $g_2(r)$.
To generate crystals that are exactly isospectral on $[0, R]$ within the machine epsilon, we use the approximately isospectral crystals found via simulated annealing as initial inputs and refine the crystal parameters by minimizing the metric $D_\Theta(C_1, C_2)$ [Eq. (\ref{Dtheta})] via the BFGS algorithm \cite{Liu89}.
The refinement procedure proceeds until the stopping criterion $D_{\Theta}(C_1, C_2) < 10^{-8}$ is satisfied.
Note that, in this final refinement stage, the objective function (\ref{Dtheta}) does not involve $\xi$, because it is assumed that particle positions will only vary slightly (compared to the characteristic length scale $\rho^{-1/d}$) during the BFGS optimization procedure.
Because the initial configurations before refinement have $\xi\sim 1$, we assume that $\xi$ will remain on this order of magnitude during refinement, which is what we have observed for all cases of inequivalent isospectral crystals identified in Sec. \ref{sec:res}.

\section{Results for Isospectral Crystals for $d = 1, 2, 3$}
\label{sec:res}
In this section, we present results for the minimum multi-particle basis of inequivalent isospectral crystals., $n_{min}(d)$, in one, two and three dimensions identified via the algorithm in Sec. \ref{sec:alg}.
These results lead to our conjecture that $n_{\text{min}}(d) = 4, 3, 2$ for $d = 1, 2, 3$, respectively.
For $d = 1$, we provide a rigorous proof that $n_{\text{min}}(1) = 4$, demonstrating the accuracy of our algorithm.
We also introduce a theorem on isospectrality of crystals, which enable us to rigorously show that certain 2D inequivalent isospectral crystals with $n = 3$ identified via our numerical algorithm indeed possess identical $g_2(r)$ for all $r$, despite the fact that the algorithm considers only a finite range of pair distances.
For $d = 3$, our numerical result $n_{\text{min}}(3) = 2$ is consistent with the rigorous result \cite{Sc90b} that 3D inequivalent isospectral lattices to not exist, which further validates the precision and power of the algorithm.

\subsection{1D Cases}
Our algorithm identifies a pair of 1D inequivalent isospectral crystals with $n = 4$, whose configurations are shown in Fig. \ref{fig:1d}(a).
This is the only case of 1D inequivalent isospectral 4-particle bases identified by the algorithm.
Here, the shortest radial distance $r_1 = L/13$ and $L = 4\rho^{-1}$ is the lattice constant of the underlying Bravais lattice.
The two crystals are clearly inequivalent, because the segments of lengths $r_1$ and $2r_1$ are adjacent in $C_1$, but are separated by a segment of length $3r_1$ in $C_2$.
No inequivalent isospectral crystals with $n \leq 3$ are found, i.e., $n_{\text{min}}(1) = 4$ according to our numerical methods.

\begin{figure*}[htp]
    \centering
    \subfloat[]{\includegraphics[width = 100mm]{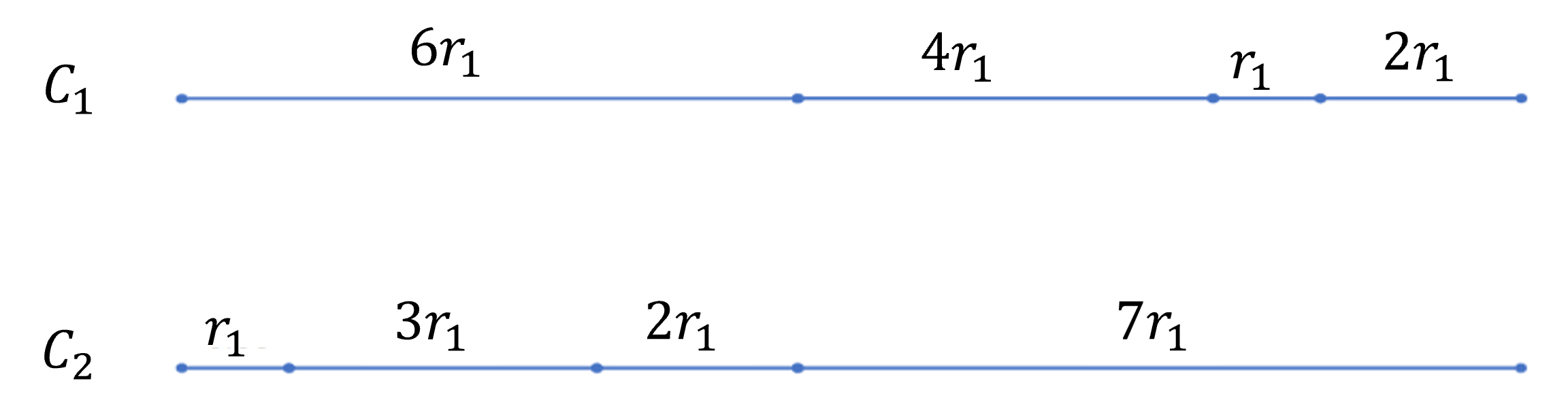}}
    
    \subfloat[]{\includegraphics[width = 160mm]{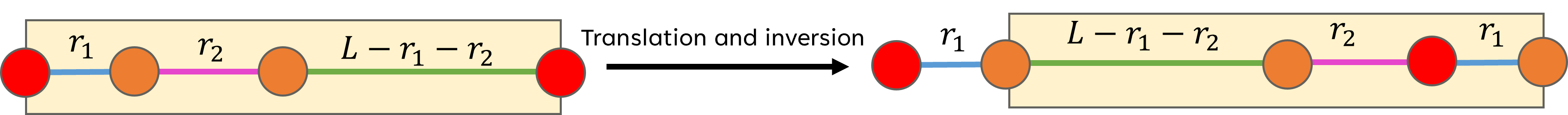}}
    \caption{(a) The case of 1D inequivalent isospectral crystals with 4-particle bases identified via our algorithm. 
    (b) Illustration showing that two 1D isospectral crystals with 3-particle bases are equivalent, as they are related by a translation and an inversion; see the configurations in the fundamental cells indicated by the yellow boxes.}
    \label{fig:1d}
\end{figure*}

We now prove rigorously that $n_{\text{min}}(1) = 4$.
To show the isospectrality of two 1D crystals $C_1, C_2$, one only needs to show that the multisets of radial distances that are smaller than the lattice constant 
\begin{equation}
    \theta_{C_i} = \{(r = |p_j - p_{j'}|, m_r): p_j, p_{j'} \in \mathbf{P}_{C_i}; j<j'\}
\end{equation}
are identical for both crystals.
To show that $\theta_{C_1} = \theta_{C_2}$ implies $\Theta_{C_i} = \Theta_{C_2}$, we observe that due to Eq. (\ref{thetaC}), all pair distances in $\Theta_{C_i}$ are of the form $||p_j - p_{j'}| + n_1L|$ for $|p_j - p_{j'}| \in \theta_{C_i}$ and integer $n_1$.
For the crystals with $n = 4$ shown in Fig. \ref{fig:1d}(a), one finds by enumerating the pair distances, counting multiplicity, that
\begin{equation}
    \theta_{C_1} =\theta_{C_2} = \{(xr_1, 1) : x = 1,\dots, 12\}.
\end{equation}
which proves that $C_1$ and $C_2$ are isospectral, i.e., $n_{\text{min}}(1) \leq 4$.

We now prove that 1D inequivalent isospectral crystals with $n\leq 3$ do not exist. 
For $n = 3$, given the underlying lattice constant $L = 3\rho^{-1}$ and the shortest two radial distances $r_1, r_2$, the fundamental cell of a 3-particle basis can be partitioned into line segments of lengths $r_1, r_2$, and $L - r_1 - r_2$. 
Because the fundamental cell is periodically repeated, the crystals created by all 6 perturbations of these 3 segments in the fundamental cell are related by isometries, which can be shown by considering each of the 6 perturbations.
An example of such a perturbation is given in Fig. \ref{fig:1d}(b).
Here, the three line segments in the fundamental cells delineated with red particles are ordered differently.
However, the two crystals are equivalent because they are related by the composition of a translation and an inversion, and this equivalence is evident from Fig. \ref{fig:1d}(b) when one considers the alternative choice of fundamental cells indicated with the yellow boxes.
One can similarly consider the other perturbations.
Thus, a 1D crystal with $n = 3$ is completely determined, up to isometry, by $r_1$ and $r_2$ in its theta series.
Similarly, for $n = 2$, the fundamental cell is partitioned into line segments of lengths $r_1$ and $L - r_1$. 
The two crystals created by the two perturbations of these line segments are clearly related by an inversion.
The case of $n = 1$ is trivial.
To summarize, the numerical result $n_{\text{min}}(1) = 4$ determined by  our algorithm exactly matches the result derived using rigorous methods, demonstrating the efficiency and reliability of the algorithm.

\subsection{2D Cases}
Our algorithm identifies 2D inequivalent isospectral crystals with $n = 3$, some examples of which are shown in Fig. \ref{fig:2D_full} and \ref{fig:2D}.
No 2D inequivalent isospectral crystals with $n \leq 2$ are found by the algorithm, i.e., $n_{\text{min}}(2) = 3$ according to our numerical methods.
We expect the 2D results via the algorithm to be highly reliable, given that the results of its 1D counterpart was proven to be exact, even though it is challenging to rigorously prove that 2D inequivalent isospectral 2-particle bases do not exist.

\begin{figure*}[htp]
    \centering
    \subfloat[]{\includegraphics[width = 70mm]{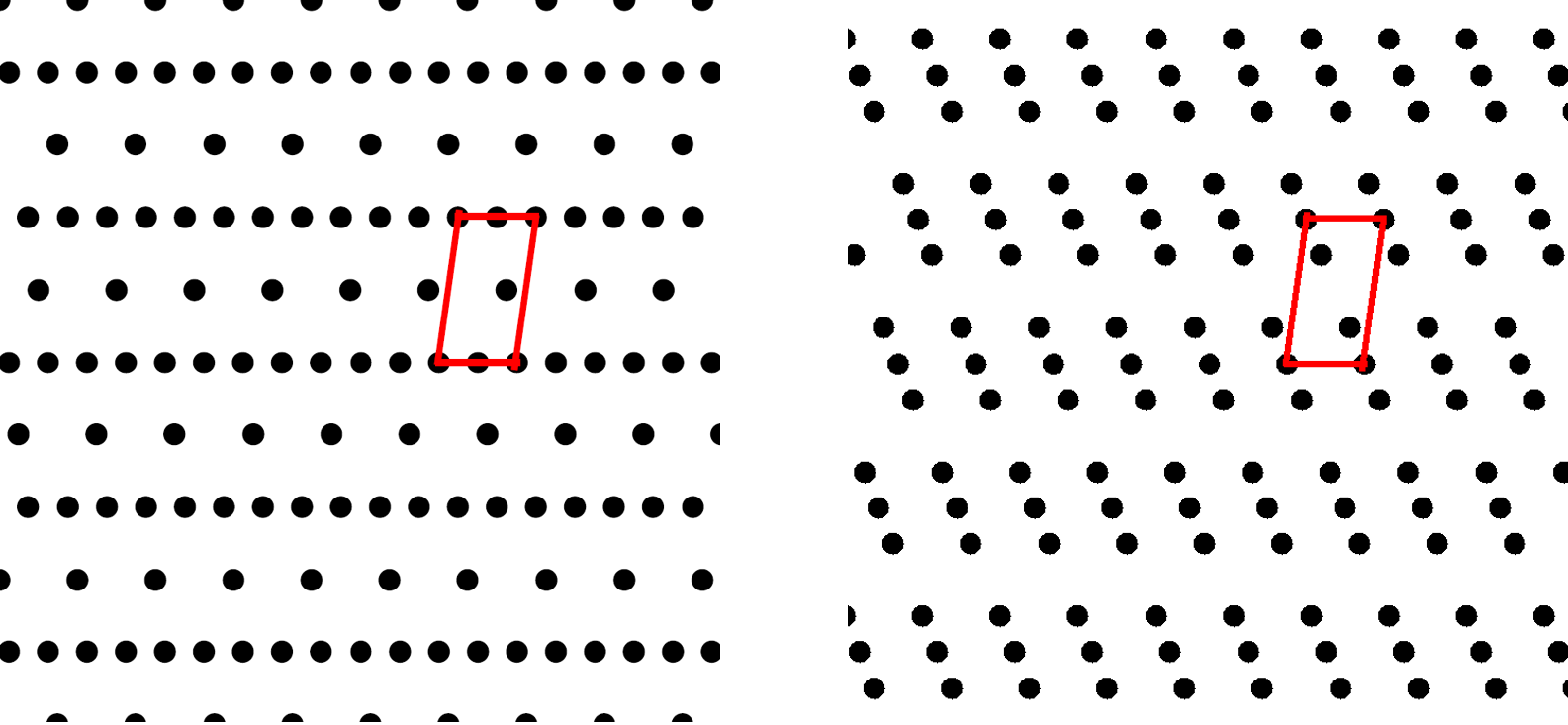}}
    \hspace{10mm}
    \subfloat[]{\includegraphics[width = 50mm]{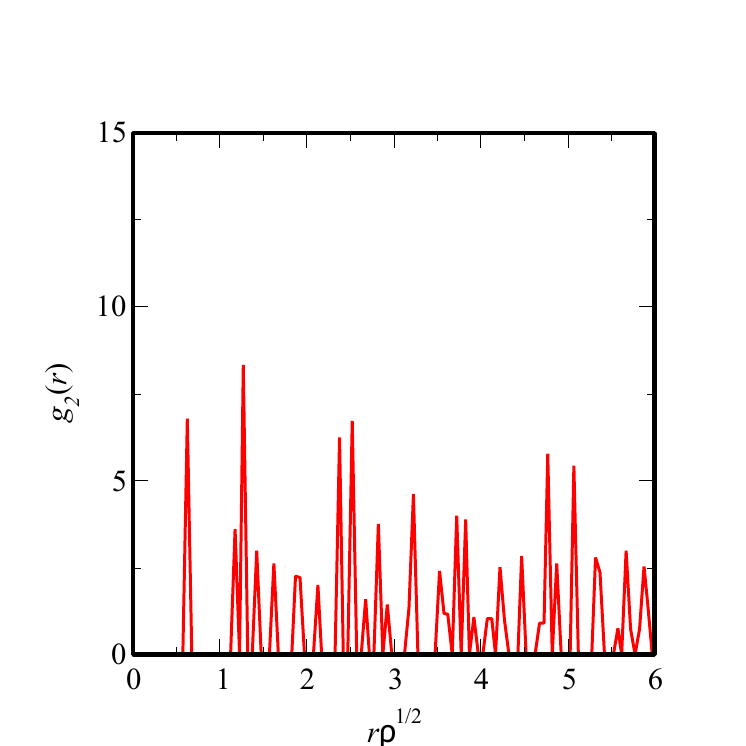}}
    \caption{(a) Configurations of the pair of 2D inequivalent isospectral crystals with 3-particle bases that attains the lowest value of $\Psi(C_1, C_2)$ (\ref{Psi}) among all isospectral 2D 3-particle bases identified via our algorithm using deformable fundamental cells. 
    The fundamental cells are oblique and are are indicated with parallelograms with red borders.
    (b) Radial distribution function $g_2(r)$ of both crystals in (a).}
    \label{fig:2D_full}
\end{figure*}

\begin{figure*}[htp]
    \centering
    \subfloat[]{\includegraphics[width = 70mm]{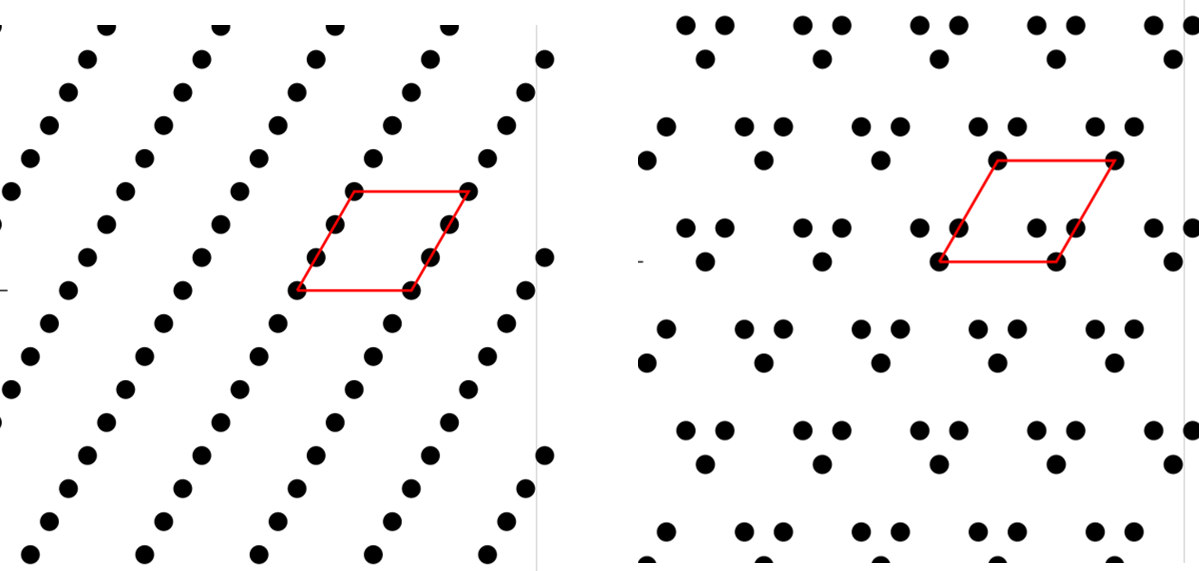}}
    \hspace{10mm}
    \subfloat[]{\includegraphics[width = 50mm]{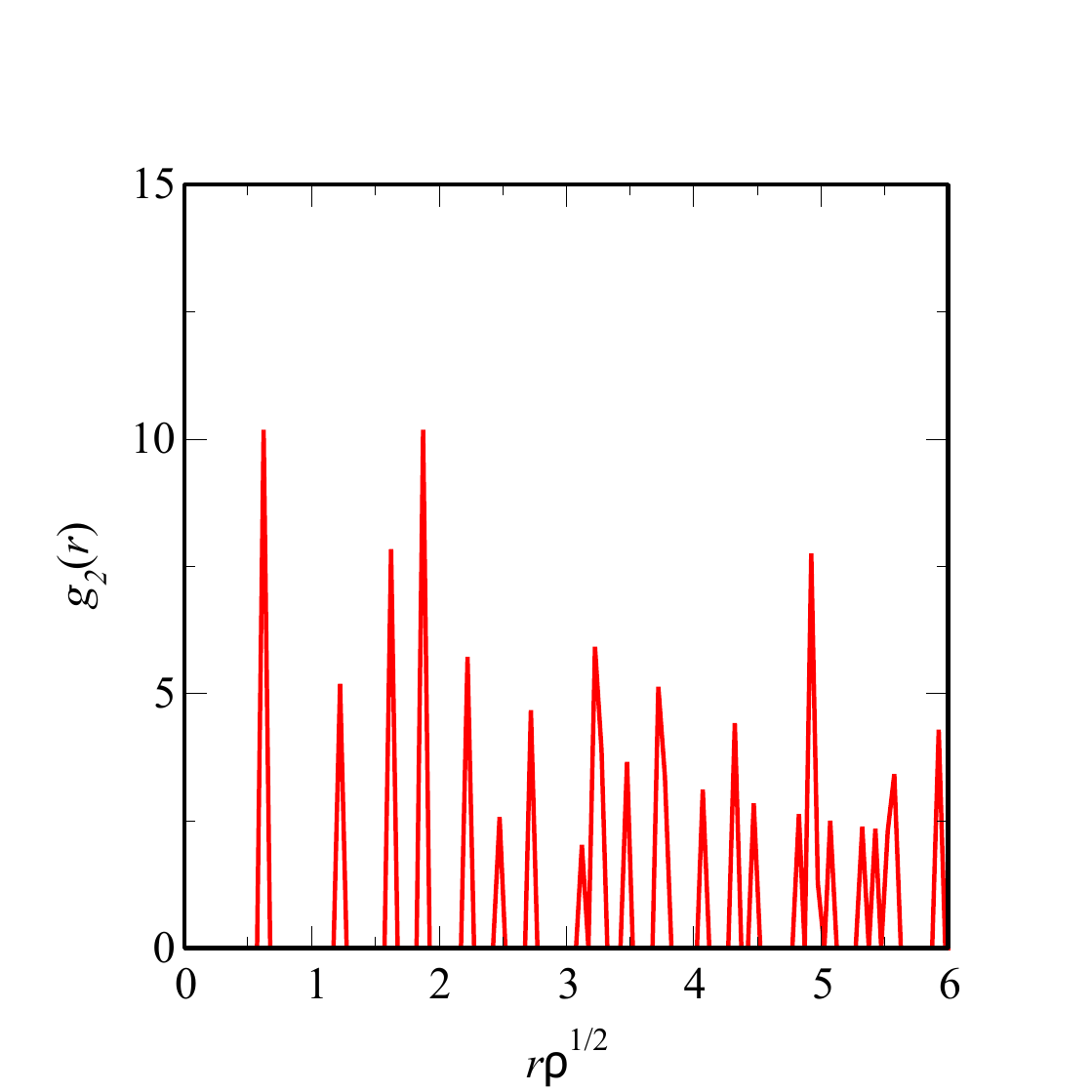}}
    
    \subfloat[]{\includegraphics[width = 70mm]{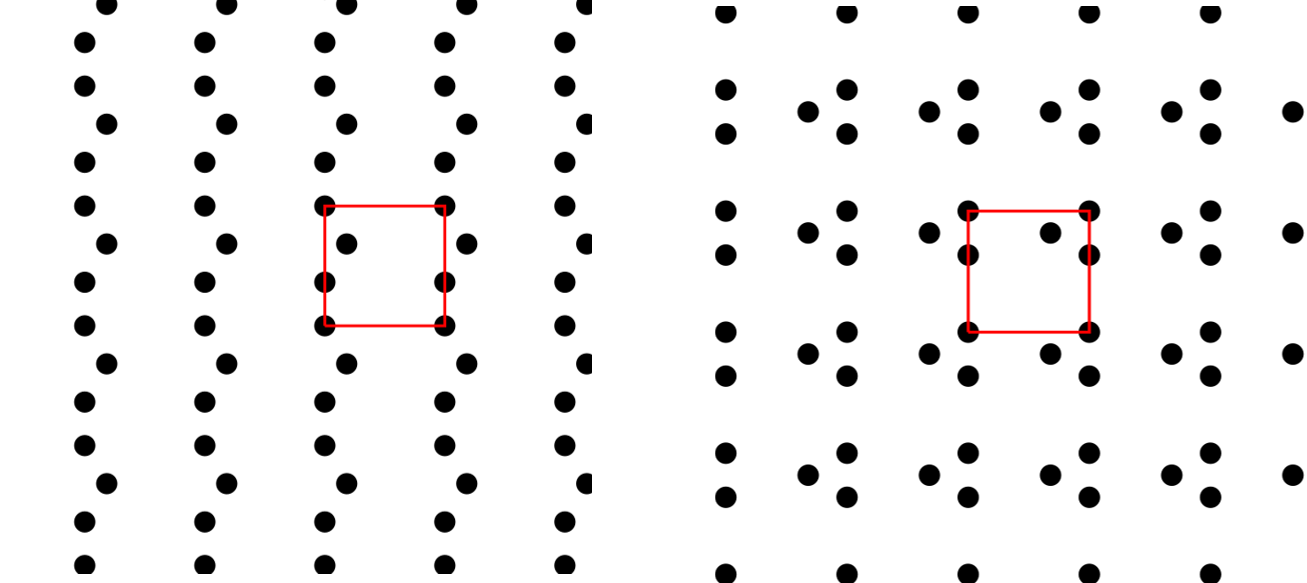}}
    \hspace{10mm}
    \subfloat[]{\includegraphics[width = 50mm]{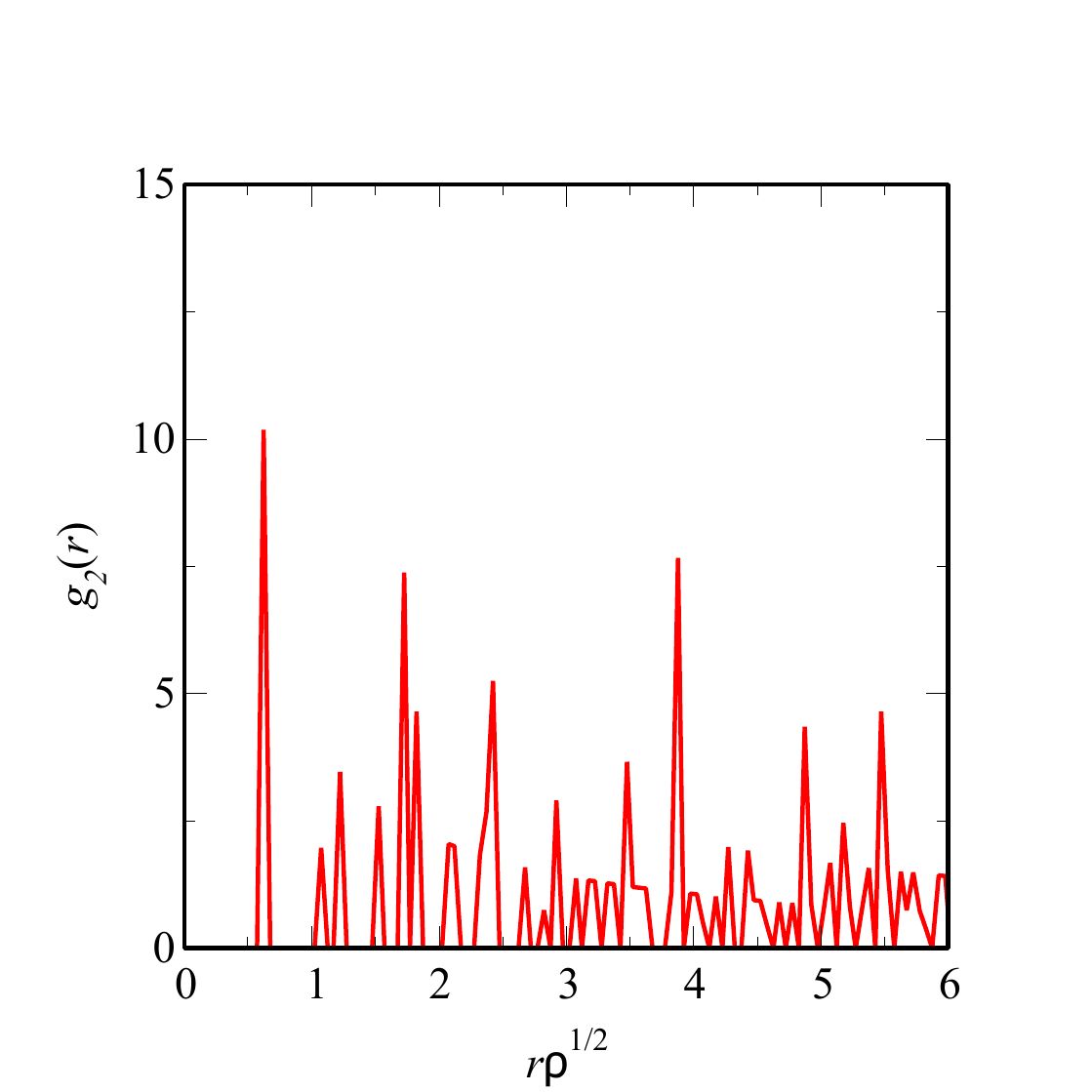}}
    
    \subfloat[]{\includegraphics[width = 70mm]{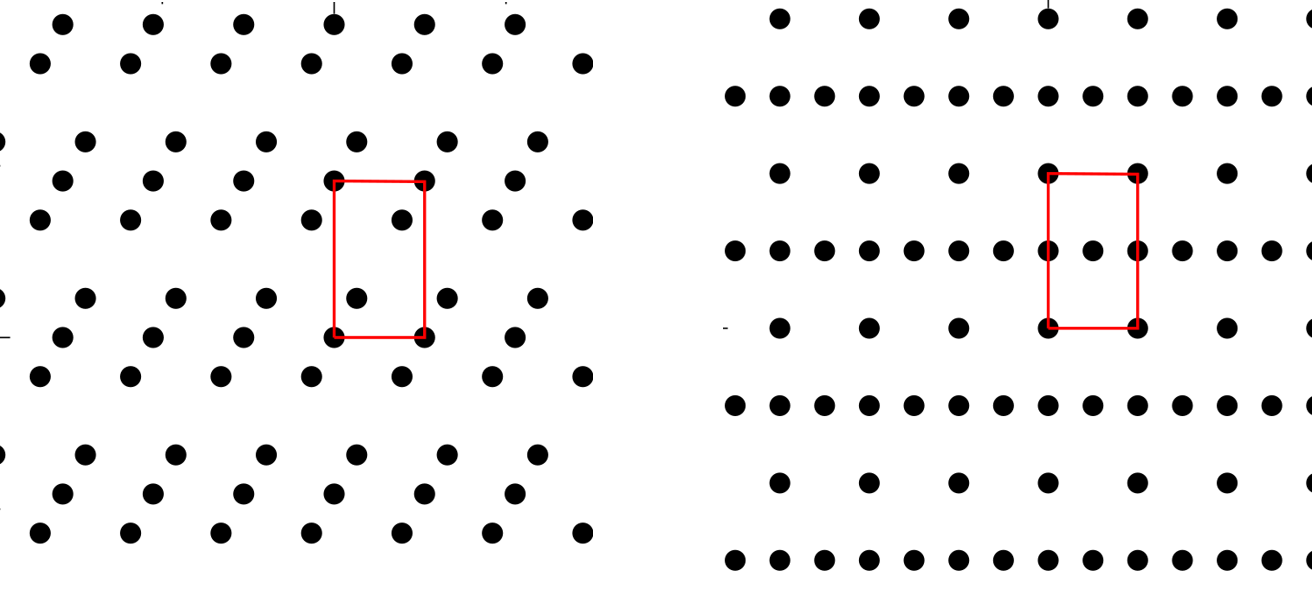}}
    \hspace{10mm}
    \subfloat[]{\includegraphics[width = 50mm]{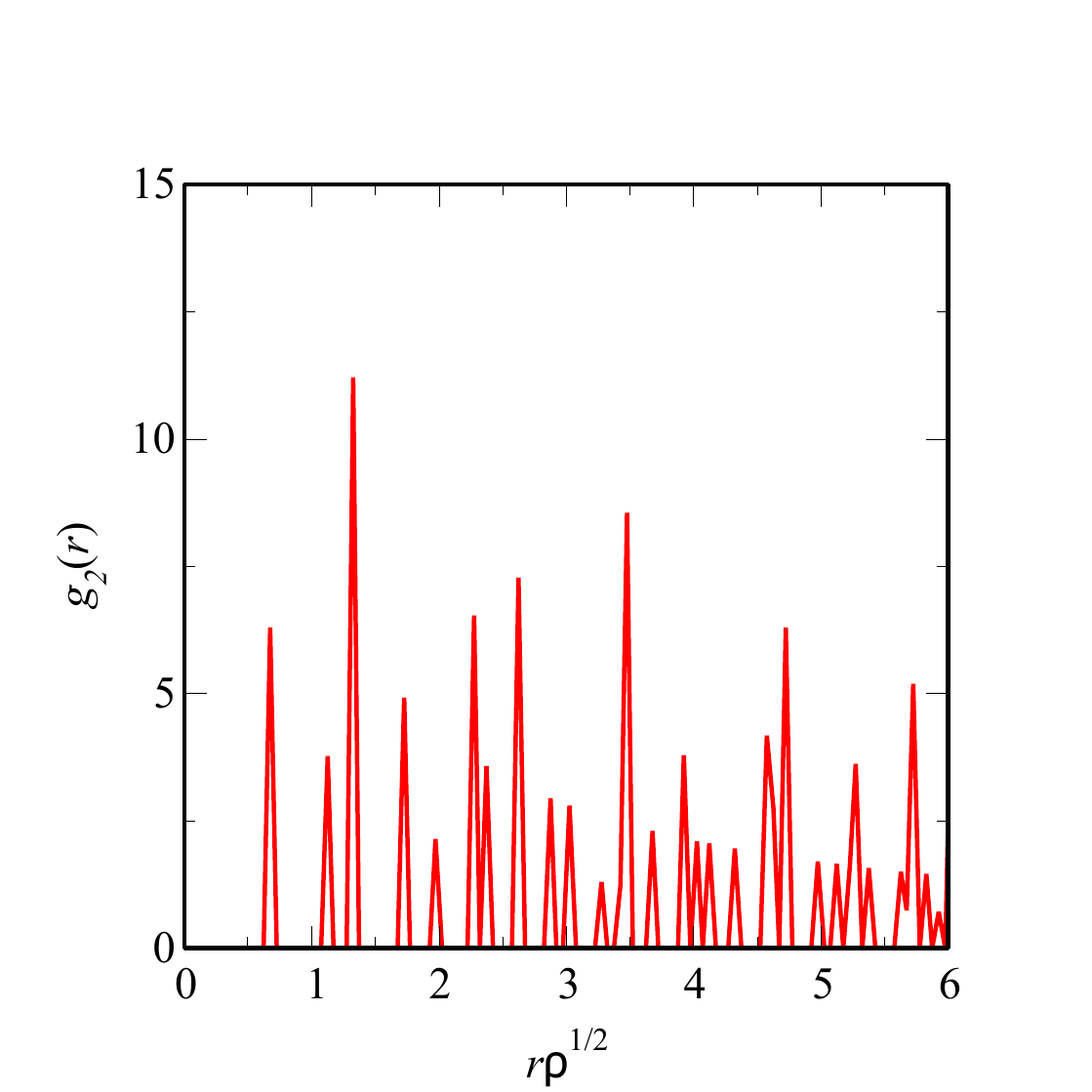}}
    \caption{Configurations of 2D inequivalent isospectral crystals with 3-particle bases identified via our algorithm using fixed fundamental cells that have more symmetry elements than simple oblique cells. 
    The fundamental cells are indicated with parallelograms with red borders.
    (a) A pair of isospectral crystals with hexagonal fundamental cells, with cmm and p3m1 symmetries, respectively.
    (b) Radial distribution function $g_2(r)$ of both crystals in (a).
    (c) A pair of isospectral crystals with square fundamental cells, both with pm symmetry.
    (d) $g_2(r)$ of both crystals in (c).
    (e) A pair of isospectral crystals with rhombic fundamental cells, with pgg and pmm symmetries, respectively.
    (f) $g_2(r)$ of both crystals in (e).
    All radial distribution functions are plotted with bin size $0.05\rho^{-1/2}$.
    }
    \label{fig:2D}
\end{figure*}

Figure \ref{fig:2D_full}(a) shows configurations of a pair of 2D inequivalent isospectral crystals with 3-particle bases, identified via our algorithm using deformable fundamental cells.
This pair attains the minimum value of our objective function $\Psi(C_1, C_2)$ (\ref{Psi}) among all isospectral 2D 3-particle bases found in this study.
The $g_2(r)$ for both crystals is shown in Fig. \ref{fig:2D_full}(b).
Because $D_{g_2}(C_1, C_2) = 0$ for isospectral crystals, the pair of crystals shown in Fig. \ref{fig:2D_full}(a) maximizes the geometric distance metric $\xi(C_1, C_2)$ (\ref{xi}).
The optimized parameters are presented in Table \ref{tab:crystalparams}, from which it is clear that the crystals have oblique fundamental cells.
The space group for both crystals is p2.
It is expected that isospectral crystals with large values of $\xi(C_1, C_2)$ have low symmetries of the fundamental cell, because the last two terms of $d_{\text{Ch}}(T(C_1), C_2)$ (\ref{chamfer}), which measure the alignment of the fundamental cells $T(F_1)$ and $F_2$, vanish for any symmetry operation $T$ consistent with the point group of the fundamental cell, thereby decreasing the value of $\xi(C_1, C_2)$ for highly symmetrical fundamental cells.

To study whether 2D isospectral crystals with 3-particle bases can attain symmetries other than the p2 group, we perform optimization using fixed fundamental cells with more symmetry elements than simple oblique cells.
We have found cases with hexagonal [Fig. \ref{fig:2D}(a)], square [\ref{fig:2D}(c)], rectangular [\ref{fig:2D}(e)] fundamental cells, and $g_2(r)$ for each pair of inequivalent isospectral crystals are given in Figs. \ref{fig:2D}(b), (d), (f), respectively.
All radial distribution functions are plotted with bin size $0.05\rho^{-1/d}$, and the finite widths and heights of the peaks in Figs. \ref{fig:2D}(b), (d), (f) are due to this binning of pair distances.
However, note that because the crystals parameters have been refined by minimizing $D_\Theta(C_1, C_2)$ (\ref{Dtheta}), the plotted $g_2(r)$ match for arbitrarily small bin sizes.
The isospectral 3-particle bases can possess high crystal symmetry, with symmetry elements such as 3-fold rotational axes, inversion centers, and reflection lines.
Specifically, the space groups for the crystals in Fig. \ref{fig:2D} are given by (a) cmm and p3m1, (c) both pm, and (e) pgg and pmm.
As expected, due to the additional constraints on fundamental cells, all of the high-symmetry cases shown in Fig. \ref{fig:2D} yield higher values of $\Psi(C_1, C_2)$ than from those cases shown in Fig. \ref{fig:2D_full} with oblique fundamental cells found via the full optimization.
Taken together, we have shown that there exist inequivalent isospectral 3-particle bases in all four 2D crystal systems.

Furthermore, Figs. \ref{fig:2D}(a) and (b) show that pairs of 2D inequivalent isospectral crystals often contain ``striped'' and ``clustered'' motifs, respectively. 
These distinct motifs occur due to the fact that our algorithm attempts to minimize the difference in $g_2(r)$ while maximizing the geometric distance $\xi(C_1, C_2)$.
Because our objective function (\ref{Psi}) is designed such that $D_{g_2}$ is reduced to small values in early stages of the simulated annealing procedure at relatively high temperatures, the algorithm must attempt to maximize the difference in three- and higher-body correlation functions at later stages with low temperatures.
Our previous works have shown that disordered systems with degenerate $g_2(r)$ can attain significant differences in $g_3$ at configurations of small triangles \cite{Wa20, Wa23}. 
Here, we show that such differences in $g_3$ also arise in crystal states.

Let $r_1 \approx r_2$ be the smallest two pair distances and consider the bond angle distribution $g_3(r_1, r_2, \theta)$ defined in Eq. (\ref{g3}). 
Striped motifs attain higher values of $g_3(r_1, r_2, \theta)$ at approximately linear 3-point configurations, where $\theta \approx \pi$, whereas clustered motifs attain higher values of $g_3(r_1, r_2, \theta)$ for equilateral-triangle-like configurations with $\theta \approx \pi/3$.
The striped and clustered motifs found in the inequivalent isospectral crystals  clearly indicates that their higher-order correlation functions are different.
For example, in the case of Fig. \ref{fig:2D}(a), we have $r_1 = r_2 = (\sqrt{2\sqrt{3}}/3)\rho^{-1/2}$.
The striped structure has a delta peak at $g_3(r_1, r_2, \pi)$ and vanishes at $g_3(r_1, r_2, \pi/3)$.
By contrast, the clustered structure has a delta peak at $g_3(r_1, r_2, \pi/3)$ and vanishes at $g_3(r_1, r_2, \pi)$.

\begin{figure*}[htp]
    \centering
    \subfloat[]{\includegraphics[width = 100mm]{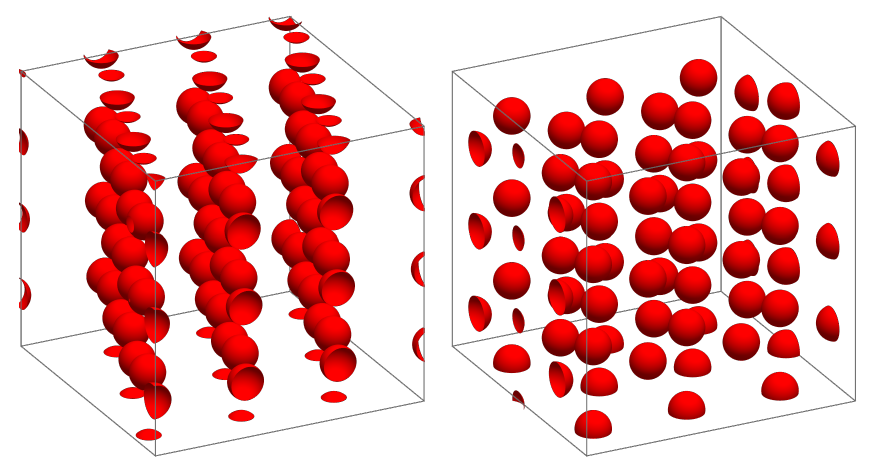}}
    \subfloat[]{\includegraphics[width = 60mm]{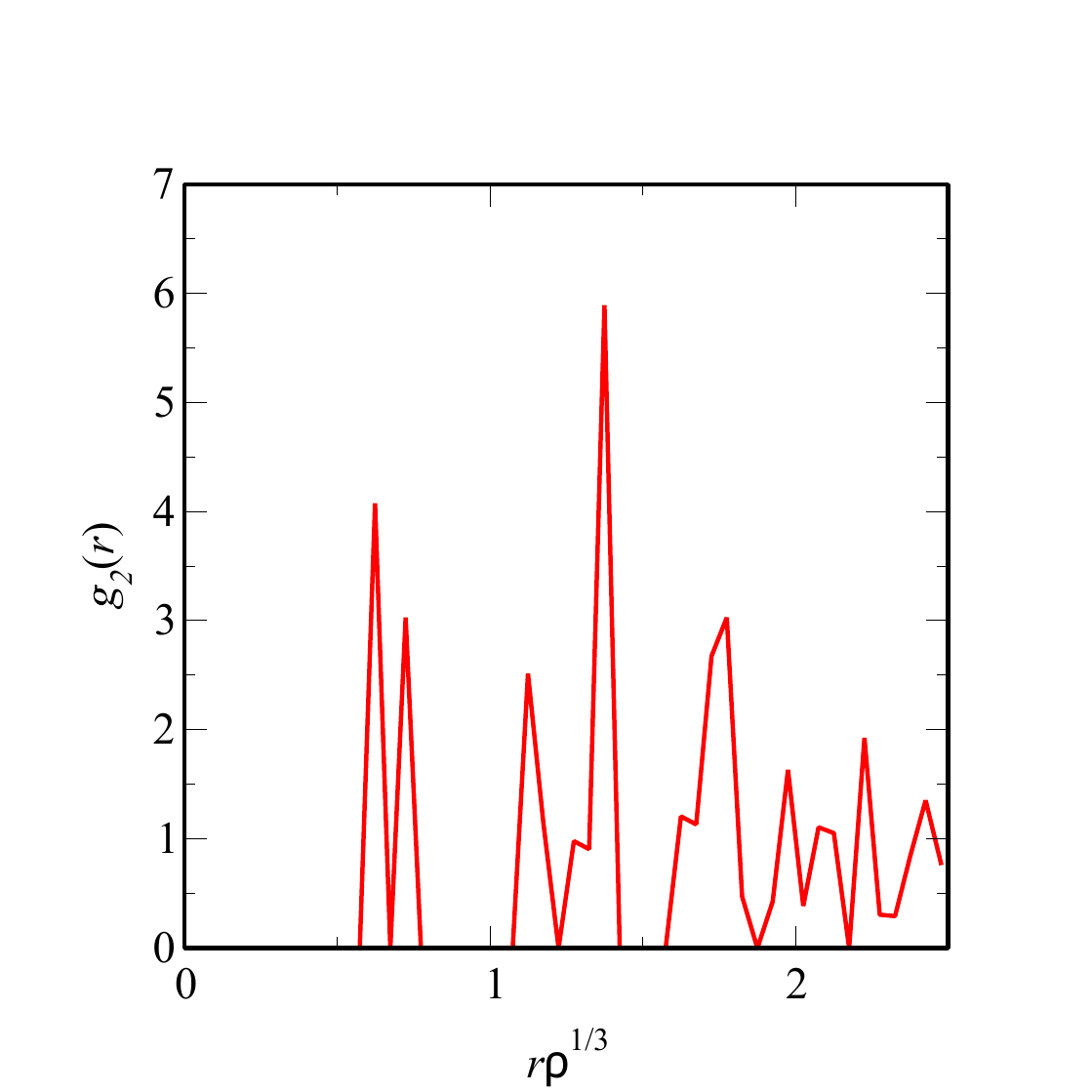}}
    \caption{(a) A pair of 3D inequivalent isospectral crystals with 2-particle bases, identified via our algorithm using deformable fundamental cells.
    The fundamental cells are triclinic.
    (b) Radial distribution function of both crystals shown in (a), plotted with bin size $0.05\rho^{-1/3}$.}
    \label{fig:3D}
\end{figure*}

\subsection{3D Cases}
Using deformable fundamental cells, our algorithm identifies pairs of 3D inequivalent isospectral crystals with $n = 2$, and an example with triclinic fundamental cells are shown in Fig. \ref{fig:3D}(a), whose $g_2(r)$ is plotted in Fig. \ref{fig:3D}(b).
No 3D isospectral crystals with $n = 1$ are found, i.e., $n_{\text{min}}(3) = 2$ via our numerical methods.
The result for $n_{\text{min}}(3)$ obtained via our algorithm is consistent with Schiemann's proof that there exist no 3D inequivalent isospectral lattices \cite{Sc90b}, again attesting to the accuracy of the algorithm.

We have also performed the optimization procedure in Sec. \ref{sec:alg} using fixed cubic and tetragonal fundamental cells.
In these cases, we find that one requires at least $n = 3$ to achieve inequivalent isospectral crystals of such high symmetries, which is larger than  $n_{\text{min}}(3) = 2$ using deformable fundamental cells.
The larger value of $n$ needed for cubic and tetragonal fundamental cells is due to the fact that the required symmetries imposes significant constraints on the fundamental-cell vectors, which reduces the number of degrees of freedom that can be employed for two inequivalent crystals to match their pair statistics.
In general, we have
\begin{equation}
    n_{\text{min}}(d) = \min_{F}n'_{\text{min}}(d; F),
    \label{nmindF}
\end{equation}
where $n'_{\text{min}}(d; F)$ is the minimum $n$ for isospectrality using a fixed fundamental cell $F$. 
Equation (\ref{nmindF}) implies that $n'_{\text{min}}(d; F) \geq n_{\text{min}}(d)$, and thus using fixed fundamental cells does not affect our conclusions about $n_{\text{min}}(d)$.

As in the 2D case, the isospectral 2-particle bases shown in Fig. \ref{fig:3D}(a) clearly possess different higher-order correlation functions, as the structure on the left contains chain-like motifs, whereas the one on the right contains well-separated individual particles.
In this case, both $g_3$ and $g_4$ of the two crystals are distinctly different.
Chain-like motifs lead to a prevalence of approximately linear 3- and 4-point configurations, whereas well-separated individual particles lead to equilateral-triangle-like 3-point configurations and regular-tetrahedron-like 4-point configurations.

\begin{figure*}[htp]
    \centering
    \subfloat[]{\includegraphics[width = 70mm]{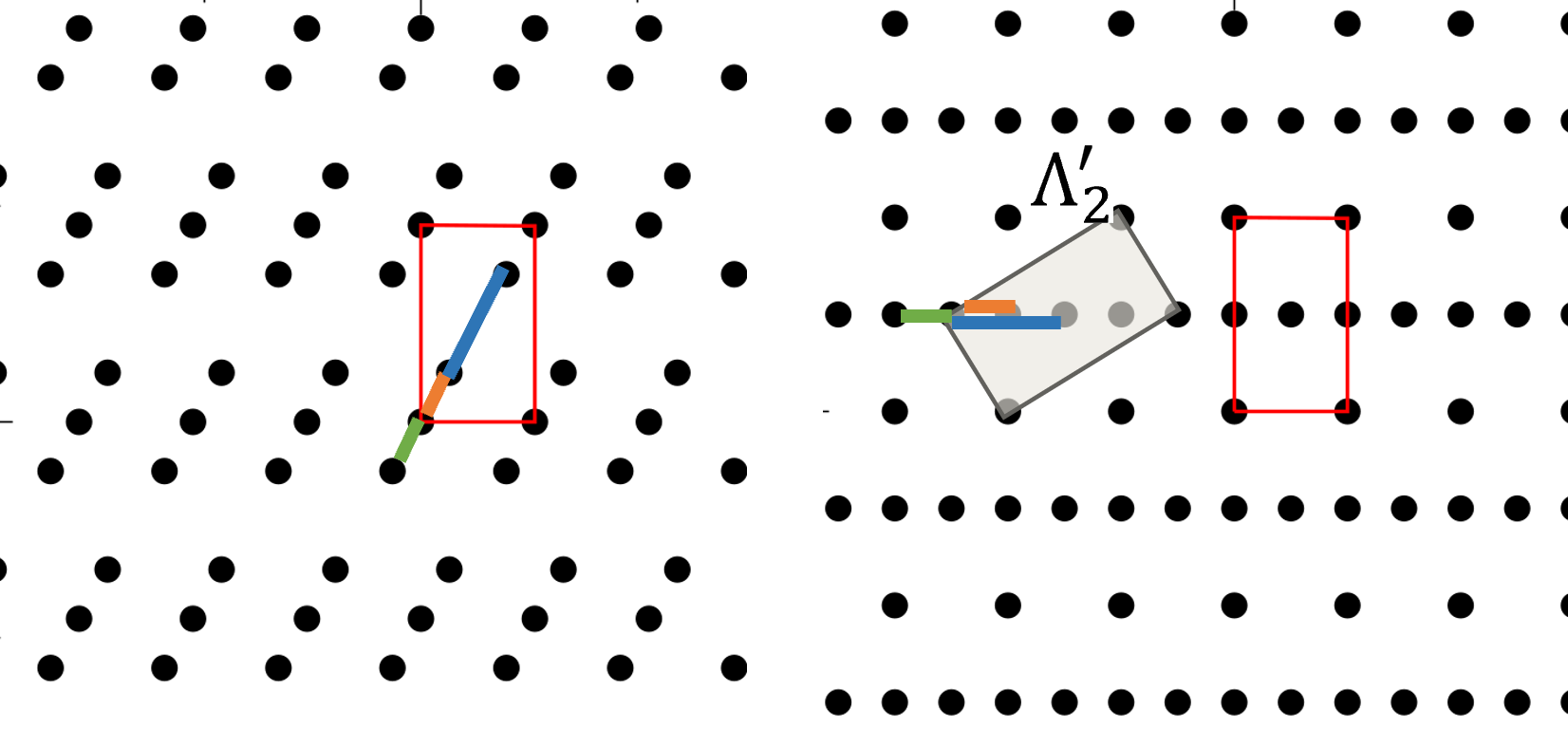}}
    \hspace{10mm}
    \subfloat[]{\includegraphics[width = 70mm]{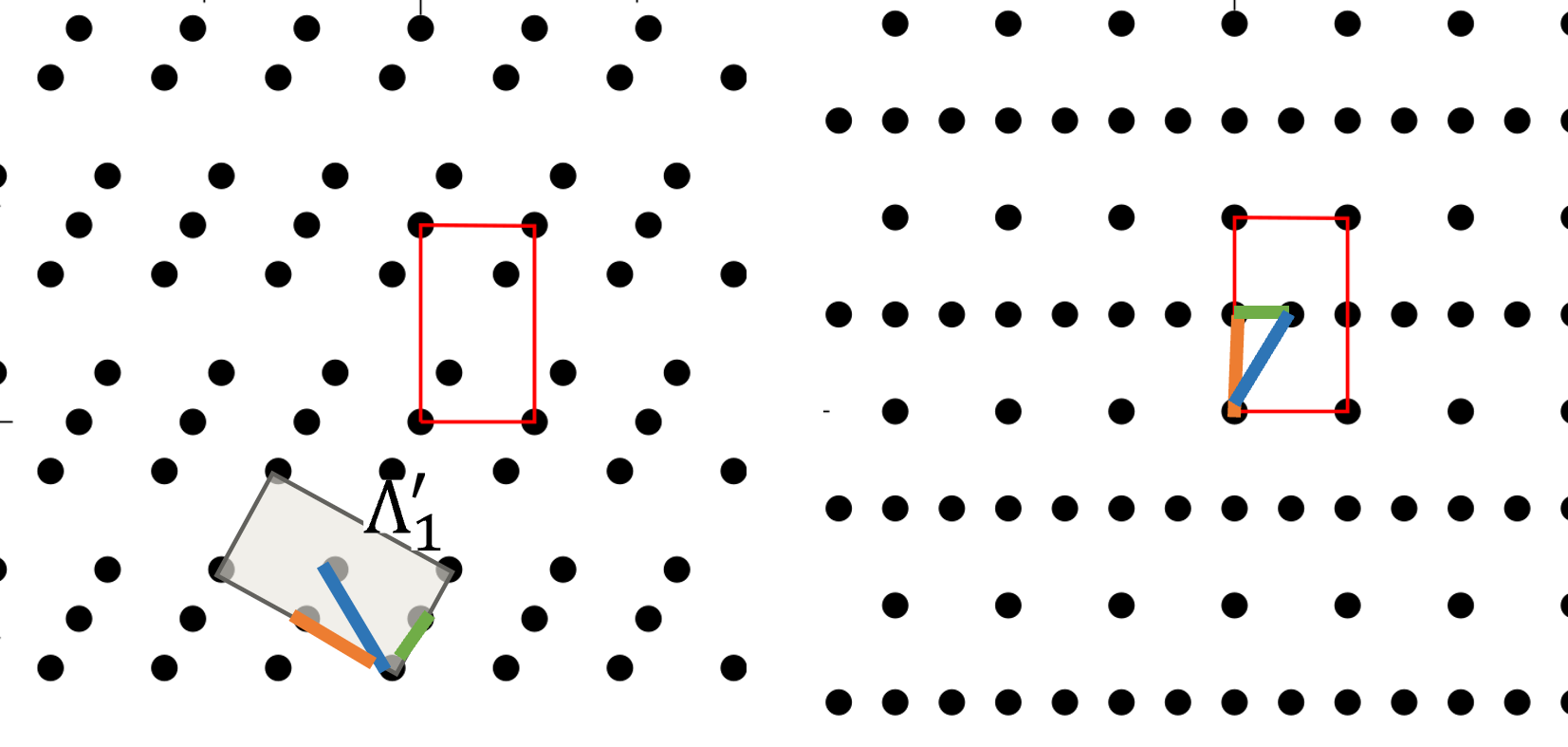}}
    \caption{(a) Illustration of the proof that the theta series of the crystals in Fig. \ref{fig:2D}(c) satisfy $\Theta_{C_1} \subseteq \Theta_{C_2}$.
    Lines of the same color indicate pairs of $\mathbf{r}_{ij}$ and $\mathbf{r}_{ij}'$. 
    The corners of the gray rectangle indicate four points in $\Lambda_2'$.
    (b) Illustration of the proof that the theta series of the crystals in Fig. \ref{fig:2D}(c) satisfy $\Theta_{C_2} \subseteq \Theta_{C_1}$.
    The corners of the gray rectangle indicate four points in $\Lambda_1'$.
    Taken together, (a) and (b) show that the two crystals are isospectral for all pair distances.}
    \label{fig:2Dproof}
\end{figure*}

\subsection{Rigorous Results on Isospectrality for All Pair Distances}
To search for inequivalent isospectral crystals, our algorithm considers the radial distribution functions and theta series only for a finite range of radial distances, i.e., up to twice the longest diagonal of the fundamental cell. 
Therefore, it is important to investigate whether the 2D and 3D cases identified above are isospectral for all $r$.
For this purpose, here we introduce a theorem that enables one to conclude isospectrality for all $r$ in certain cases, which include all of the 2D 3-particle bases in Fig. \ref{fig:2D}.

\begin{thm}
    Let $C_1$ and $C_2$ be two $d$-dimensional crystals of $n$-particle basis with common underlying basis vectors $\mathbf{a}_1, \dots, \mathbf{a}_d$.
    One has $\Theta_{C_1} \subseteq \Theta_{C_2}$, if for any pair displacement vector $\mathbf{r}_{ij} = \mathbf{p}_i - \mathbf{p}_{j} = \lambda_1 \mathbf{a}_1 + \dots + \lambda_d \mathbf{a}_d$ between particles $\mathbf{p}_i, \mathbf{p}_j \in \mathbf{P}_{C_1}$,
    there exist vectors  $\mathbf{a}'_1, \dots, \mathbf{a}'_d$ and some $\mathbf{q}\in C_2$, such that the following conditions are satisfied:
    \begin{enumerate}
        \item There exists an isometry $T\in E(d)$, such that $\mathbf{a}'_k = T(\mathbf{a}_k)$ for each $k = 1, \dots, d$;
        \item The set $\Lambda'_2 \coloneqq \{n_1 \mathbf{a}'_1 + \dots n_d\mathbf{a}'_d + \mathbf{q}: n_1, \dots, n_d \in \mathbb{Z}\}$ satisfies $\Lambda'_2 \subseteq C_2$; and
        \item The corresponding vector  $\mathbf{r}'_{ij} \coloneqq \lambda_1 \mathbf{a}'_1 + \dots + \lambda_d \mathbf{a}'_d$ satisfies $\mathbf{r}'_{ij} + \mathbf{q} \in C_2$.
    \end{enumerate}
    We remark that by definition [Eq. (\ref{thetaC})], $C_1$ and $C_2$ are isospectral if $\Theta_{C_1} \subseteq \Theta_{C_2}$ and $\Theta_{C_2} \subseteq \Theta_{C_1}$.
    \label{theorem}
\end{thm}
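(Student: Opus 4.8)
\emph{Proof proposal.} The plan is to reduce the claimed multiset inclusion $\Theta_{C_1}\subseteq\Theta_{C_2}$ to a pair-by-pair statement and then to realize each contribution of $C_1$ explicitly as a family of genuine pair distances inside $C_2$. First I would split $\Theta_{C_1}$ according to its defining decomposition in Eq.~(\ref{thetaC}): the lattice part $\Theta_{\Lambda_{C_1}}$ together with the cross terms indexed by ordered pairs $(\mathbf{p}_i,\mathbf{p}_j)$ with $i\neq j$. Because $C_1$ and $C_2$ are built on the \emph{common} basis $\mathbf{a}_1,\dots,\mathbf{a}_d$, their underlying lattices coincide, so $\Theta_{\Lambda_{C_1}}=\Theta_{\Lambda_{C_2}}$ and the lattice part already sits inside $\Theta_{C_2}$ with the correct multiplicities. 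It therefore suffices to show, for each fixed pair $(i,j)$, that the multiset
\begin{equation}
 M_{ij}=\{\,|\mathbf{r}_{ij}+\mathcal{L}|:\mathcal{L}\in\Lambda\,\}
 \label{eq:Mij}
\end{equation}
of cross distances produced by $(\mathbf{p}_i,\mathbf{p}_j)$ in $C_1$ is contained, as a multiset, in $\Theta_{C_2}$.

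Next I would use the three hypotheses to produce these distances physically in $C_2$. Fixing $(i,j)$ and choosing $T$, the congruent basis $\mathbf{a}'_1,\dots,\mathbf{a}'_d$, and the anchor point $\mathbf{q}$ guaranteed by conditions~1--3, I would consider the single point $\mathbf{r}'_{ij}+\mathbf{q}\in C_2$ (condition~3) together with the embedded copy of the lattice $\Lambda'_2=\{\sum_k n_k\mathbf{a}'_k+\mathbf{q}\}\subseteq C_2$ (condition~2). Every distance from $\mathbf{r}'_{ij}+\mathbf{q}$ to a point of $\Lambda'_2$ is then a bona fide pair distance between two points of $C_2$, so the multiset
\begin{equation}
 M'_{ij}=\{\,|\mathbf{r}'_{ij}-{\textstyle\sum_k m_k\mathbf{a}'_k}|:m_1,\dots,m_d\in\mathbb{Z}\,\}
 \label{eq:Mpij}
\end{equation}
is a submultiset of $\Theta_{C_2}$. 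The heart of the argument is then to identify (\ref{eq:Mij}) with (\ref{eq:Mpij}). Writing $\mathbf{r}_{ij}=\sum_k\lambda_k\mathbf{a}_k$ and $\mathcal{L}=\sum_k n_k\mathbf{a}_k$, the bijection $n_k\leftrightarrow m_k=-n_k$ matches $\mathbf{r}_{ij}+\mathcal{L}=\sum_k(\lambda_k+n_k)\mathbf{a}_k$ with $\mathbf{r}'_{ij}-\sum_k m_k\mathbf{a}'_k=\sum_k(\lambda_k+n_k)\mathbf{a}'_k$, i.e.\ it matches the \emph{same} integer-shifted coefficient vector expressed in the two bases. Condition~1 supplies an isometry carrying $\mathbf{a}_k$ to $\mathbf{a}'_k$, so the two ordered bases share a Gram matrix and $|\sum_k c_k\mathbf{a}_k|=|\sum_k c_k\mathbf{a}'_k|$ for every coefficient vector $(c_k)$; applied to $c_k=\lambda_k+n_k$ this gives $M_{ij}=M'_{ij}$ term by term, hence as multisets. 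Summing the inclusions $M_{ij}\subseteq\Theta_{C_2}$ over all ordered pairs and adjoining the common lattice part yields $\Theta_{C_1}\subseteq\Theta_{C_2}$, which together with the symmetric hypothesis $\Theta_{C_2}\subseteq\Theta_{C_1}$ gives isospectrality, as remarked after the statement.

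I expect the main obstacle to be the multiset (as opposed to plain set) bookkeeping. It is not enough that every distance value appearing in $C_1$ also appear somewhere in $C_2$; the multiplicities $m_r$ must agree exactly, and this is precisely what Eq.~(\ref{thetaC}) demands for isospectrality. Thus the delicate point is to verify that the term-by-term bijection above is genuinely multiplicity-preserving and that distinct index pairs $(i,j)$ feed $\Theta_{C_2}$ through \emph{distinct} pairs of $C_2$-points, so that the submultisets $M'_{ij}$ accumulate inside $\Theta_{C_2}$ without collision or loss. A secondary subtlety is the precise role of the isometry in condition~1: what the argument truly requires is that the correspondence $\mathbf{a}_k\mapsto\mathbf{a}'_k$ preserve the Gram matrix of the ordered basis, so I would make explicit that $T$ acts on the $\mathbf{a}_k$ as a linear (origin-fixing) isometry, any translational part being absorbed into the anchor $\mathbf{q}$, so that the lengths of all matching lattice combinations are indeed preserved.
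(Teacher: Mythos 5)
Your proposal is correct and follows essentially the same route as the paper's own proof: write each pair displacement of $C_1$ as $\mathbf{r}_{ij}+\sum_k n_k\mathbf{a}_k=\sum_k(\lambda_k+n_k)\mathbf{a}_k$, transport the coefficient vector to the primed basis, and realize the resulting vector as the displacement between $\mathbf{r}'_{ij}+\mathbf{q}$ and $-\sum_k n_k\mathbf{a}'_k+\mathbf{q}$, both of which lie in $C_2$ by conditions 2 and 3, with condition 1 guaranteeing equal norms. Your explicit separation of the lattice part and your flagging of the multiplicity bookkeeping are minor elaborations of the same argument (the paper treats the lattice part as the $i=j$ case and is equally brief about multiplicities).
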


\begin{proof}
    Any pair displacement vector $\mathbf{r}_1$ between two particles in $C_1$ can be expressed as 
    \begin{equation}
        \mathbf{r}_1 = \mathbf{r}_{ij} + n_1\mathbf{a}_1 + \dots + n_d\mathbf{a}_d = (\lambda_1 + n_1) \mathbf{a}_1 + \dots + (\lambda_d + n_d) \mathbf{a}_d
    \end{equation}
    for some $\mathbf{r}_{ij}$ between $\mathbf{p}_i, \mathbf{p}_j \in \mathbf{P}_{C_1}$ and integers $n_1, \dots, n_d$.
    Consider the vector
    \begin{equation}
    \begin{split}
        \mathbf{r}_2 &= (\lambda_1 + n_1) \mathbf{a}'_1 + \dots + (\lambda_d + n_d) \mathbf{a}'_d \\
        &= \mathbf{r}'_{ij} + n_1\mathbf{a}'_1 + \dots n_d\mathbf{a}'_d\\
        &= (\mathbf{r}'_{ij} + \mathbf{q}) - (-n_1 \mathbf{a}'_1 - \dots - n_d\mathbf{a}'_d + \mathbf{q}). 
    \end{split}
    \end{equation}
    Clearly, $\mathbf{r}_2$ is a pair distance vector between particles $(\mathbf{r}'_{ij} + \mathbf{q})$ and $(-n_1 \mathbf{a}'_1 - \dots - n_d\mathbf{a}'_d + \mathbf{q})$ in $C_2$.
    Because $\mathbf{a}'_k = T(\mathbf{a}_k)$ for each $k$, we have $|\mathbf{r}_1| = |\mathbf{r}_2| \in \Theta_{C_2}$ for any $|\mathbf{r}_1| \in \Theta_{C_1}$.
\end{proof}

We illustrate in Fig. \ref{fig:2Dproof}(a)  that the pair of crystals in Fig. \ref{fig:2D}(e) satisfy $\Theta_{C_1} \subseteq \Theta_{C_2}$, where $C_1$ and $C_2$ refer to the pgg configuration on the left and pmm configuration on the right, respectively. 
Here, thick line segments of the same color indicate pairs of $\mathbf{r}_{ij}$ and $\mathbf{r}_{ij}'$. 
The corners of the gray rectangle indicate four points in $\Lambda_2'$, whose associated $\mathbf{q}$ is given by the meeting point of the thick line segments.
Similarly, Fig. \ref{fig:2Dproof}(b) shows that $\Theta_{C_2} \subseteq \Theta_{C_1}$, where the corners of the gray rectangle now indicate four points in $\Lambda_1'$.
Taken together, Fig. \ref{fig:2Dproof}(a) and (b) show that the two crystals are isospectral for all pair distances.

Similarly, using this theorem, one can prove isospectrality for all $r$ for the 2D 3-particle bases in Fig. \ref{fig:2D_full}(a) and Fig. \ref{fig:2D}(a) and (c).
However, we find it challenging to check for the conditions of this theorem in the case of the 3D 2-particle bases in Fig. \ref{fig:3D} due to the difficulty of extracting from the crystal configurations the regions corresponding to $\Lambda'_1$ and $\Lambda'_2$.
We remark that the converse of Theorem \ref{theorem} may not hold, i.e., isospectral crystals may not satisfy the conditions this theorem.

\section{Many-Body System with Degenerate Crystalline Ground States}
\label{sec:ground}

The identification of isospectral crystals enables one to study the degeneracy of the ground-state manifold under isotropic pair potentials.
Here, using inverse statistical-mechanical techniques, we show that there exist many-body systems under isotropic pair potentials whose low-temperature configurations can lead to both of two isospectral crystals structures.
We illustrate such ground-state degeneracy with the example in Fig. \ref{fig:2D}(a).

To search for degenerate crystalline ground-state manifolds under the action of isotropic potentials, we use the inverse technique developed in Ref. \cite{Re06a}, which determines radial pair interactions that yield ground states with targeted radial distribution functions.
This method uses a parametrized pair potential $v(r ; \mathbf{b})$, and optimizes over the potential parameters $\mathbf{b}$ to find the potential that leads to the most robust and defect-free self-assembly of the structure with the targeted $g_2(r)$ at a given $\rho$.
Here, we set the targeted $g_2(r)$ to be the one plotted in Fig. \ref{fig:2D}(b), corresponding to the 2D isospectral  3-particle bases with rhombic fundamental cells [Fig. \ref{fig:2D}(a)].
This target is chosen because of the high symmetry of both crystals, which reduces the number of peaks in $g_2(r)$ and simplifies the optimization of the potential.

To proceed, we prescribe functional form of the isotropic pair potential based on our prior knowledge of the forms of pair potentials that yield other unusual ground states, such as the honeycomb lattice \cite{Re06a}. 
We use the form
\begin{equation}
\begin{split}
    \beta v(r;\mathbf{b}) &= \left(\frac{r_0}{r}\right)^{12} + \varepsilon_1\exp\left[\left(\frac{r}{\sigma_1}\right)^2\right] \\
    &-\sum_{j = 2}^6 \varepsilon_j\exp\left[\left(\frac{r-r_j}{\sigma_j}\right)^2\right],
\end{split}
    \label{pot}
\end{equation}
where $\beta = 1/(k_BT)$, $T$ is the temperature, $\mathbf{b}$ is a vector that represents the components of all potential parameters $r_j, \epsilon_j$ and $\sigma_j$ subject to optimization.
The first term in (\ref{pot}) is a strong short-ranged repulsion that enforces the shortest radial distance, the second term is a soft repulsive Gaussian with a relatively long interaction range, i.e., $\sigma_1 \sim 2\rho^{-1/2}$, which is designed to destabilize dense triangle-lattice-like motifs, as done in Ref. \cite{Re06a}, and the remaining terms are narrow wells added to generate the desired peaks in the targeted $g_2(r)$.
Using the BFGS algorithm \cite{Liu89}, we vary the potential parameters $\mathbf{b}$ to minimize an objective function that corresponds to the difference between the lattice sum (i.e., energy per particle) computed from the targeted $g_2(r)$ and those for competitor crystal structures:
\begin{equation}
    \Delta E = \epsilon^T(\rho) - \min_{C \in \mathcal{C}, \rho' \in (\rho, \infty)}\epsilon^C(\rho'),
\end{equation}
where $\epsilon^T(\rho)$ is the energy per particle under $v(r; \mathbf{b})$ computed from targeted $g_2(r)$ at number density $\rho$, $\mathcal{C}$ is the collection of the competitor crystal structures, and $\epsilon^C(\rho')$ is the energy per particle for the competitor structure $C$ at density $\rho'$.
The initial competitor set $\mathcal{C}$ include triangle and square lattices, as well as Kagome and honeycomb crystals.
Once the optimized parameters $\mathbf{b}$ are obtained, we perform simulated annealing under $v(r; \mathbf{b})$ to find the low-temperature states, which are usually crystals with defects. 
If $g_2(r)$ for the resulting low-temperature structure is significantly different from the target, we add this structure into the competitor set $\mathcal{C}$ and repeat the BFGS minimization of $\Delta E$.
The procedure iterates until the radial distances of first four maxima in the simulated $g_2(r)$ match those of the targeted $g_2(r)$ within errors $\pm 0.025\rho^{-1/2}$.
Note that the iterative update of $\mathcal{C}$ is not included in the original algorithm \cite{Re06a}, because the targeted structures in Ref. \cite{Re06a}, e.g., the honeycomb lattice, do not have many close competitors. 
By contrast, for our targeted $g_2(r)$, 5 iterations of updates on $\mathcal{C}$ were required to achieve convergence of the potential.

\begin{figure*}[htp]
    \centering
    \subfloat[]{\includegraphics[width = 60mm]{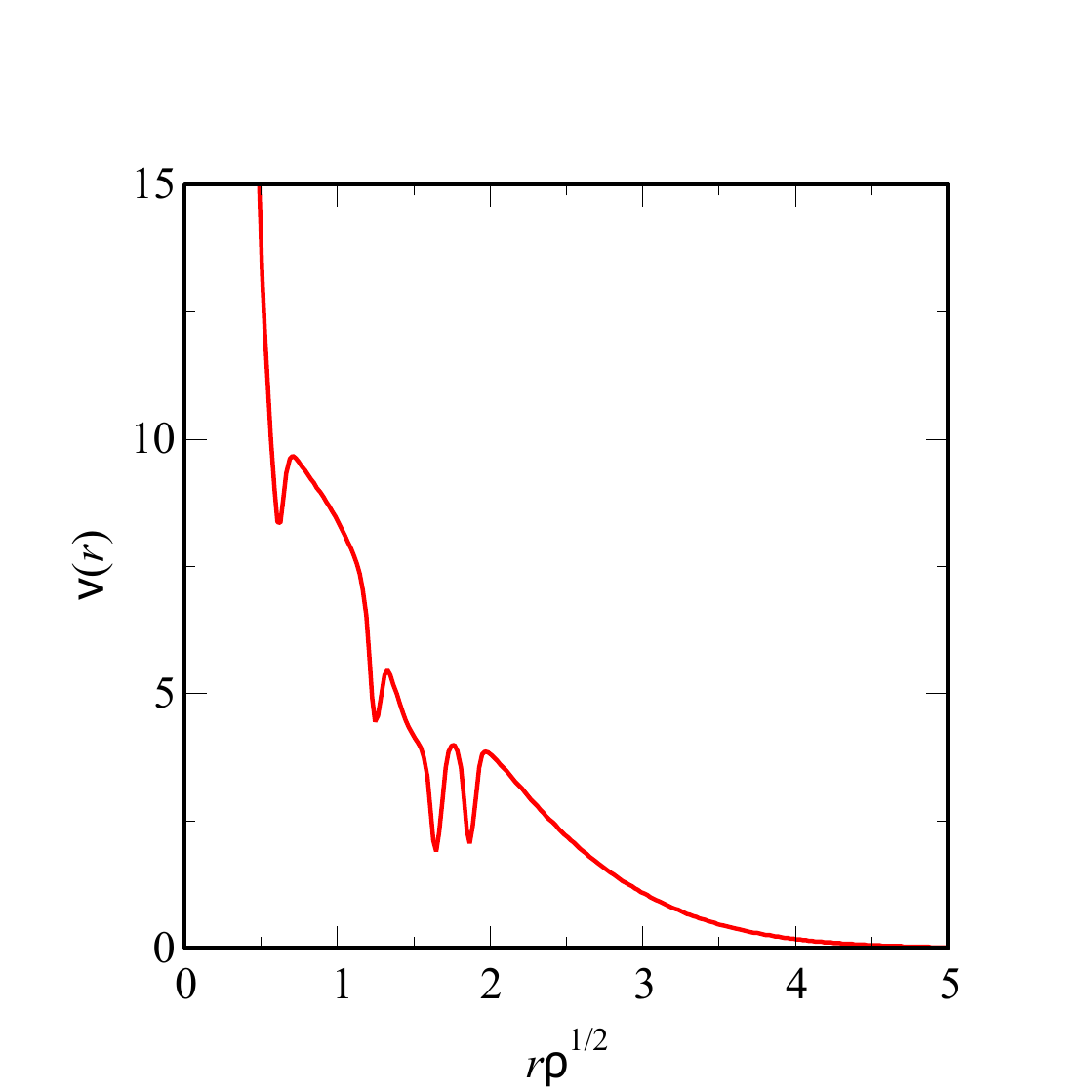}}
    \subfloat[]{\includegraphics[width = 60mm]{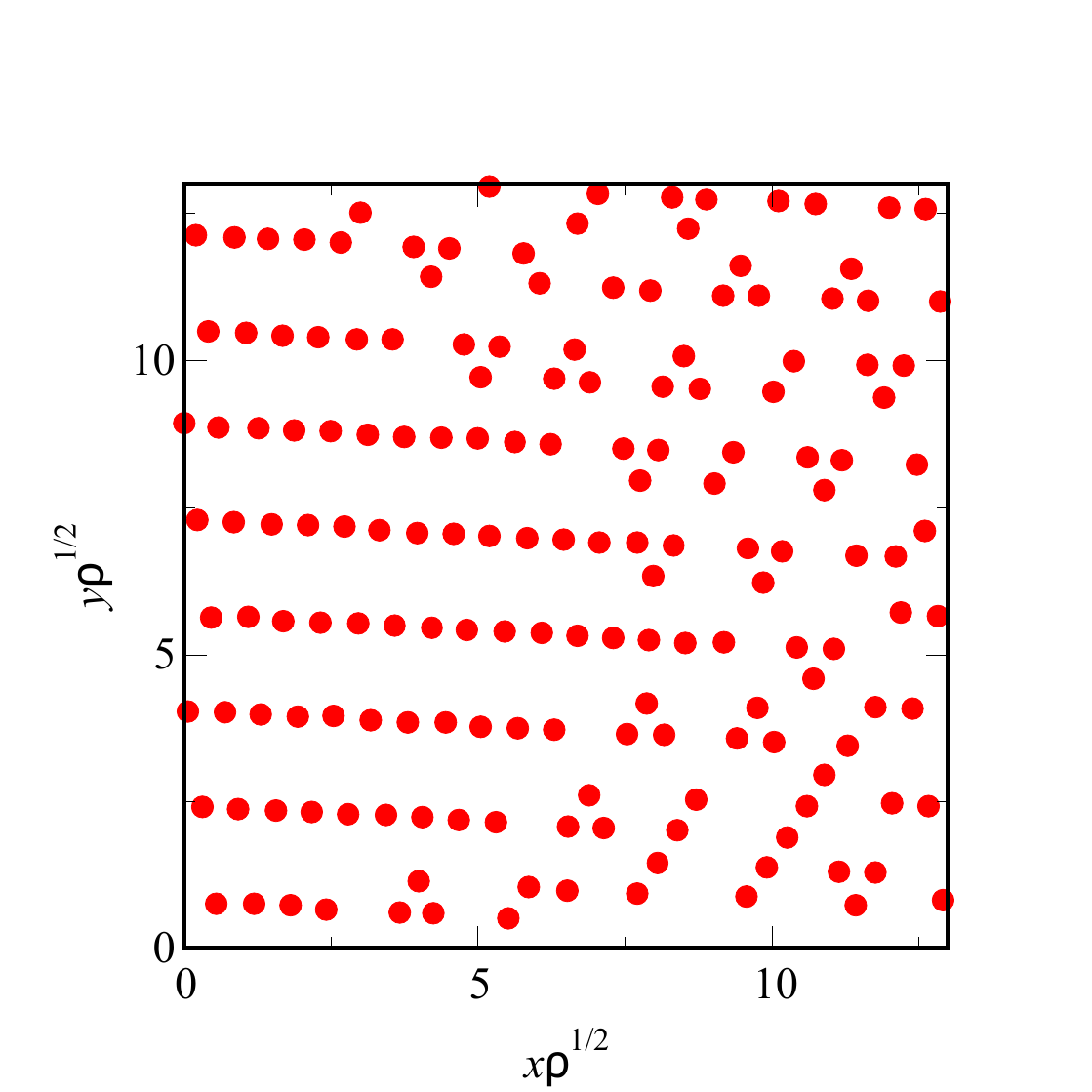}}
    \subfloat[]{\includegraphics[width = 60mm]{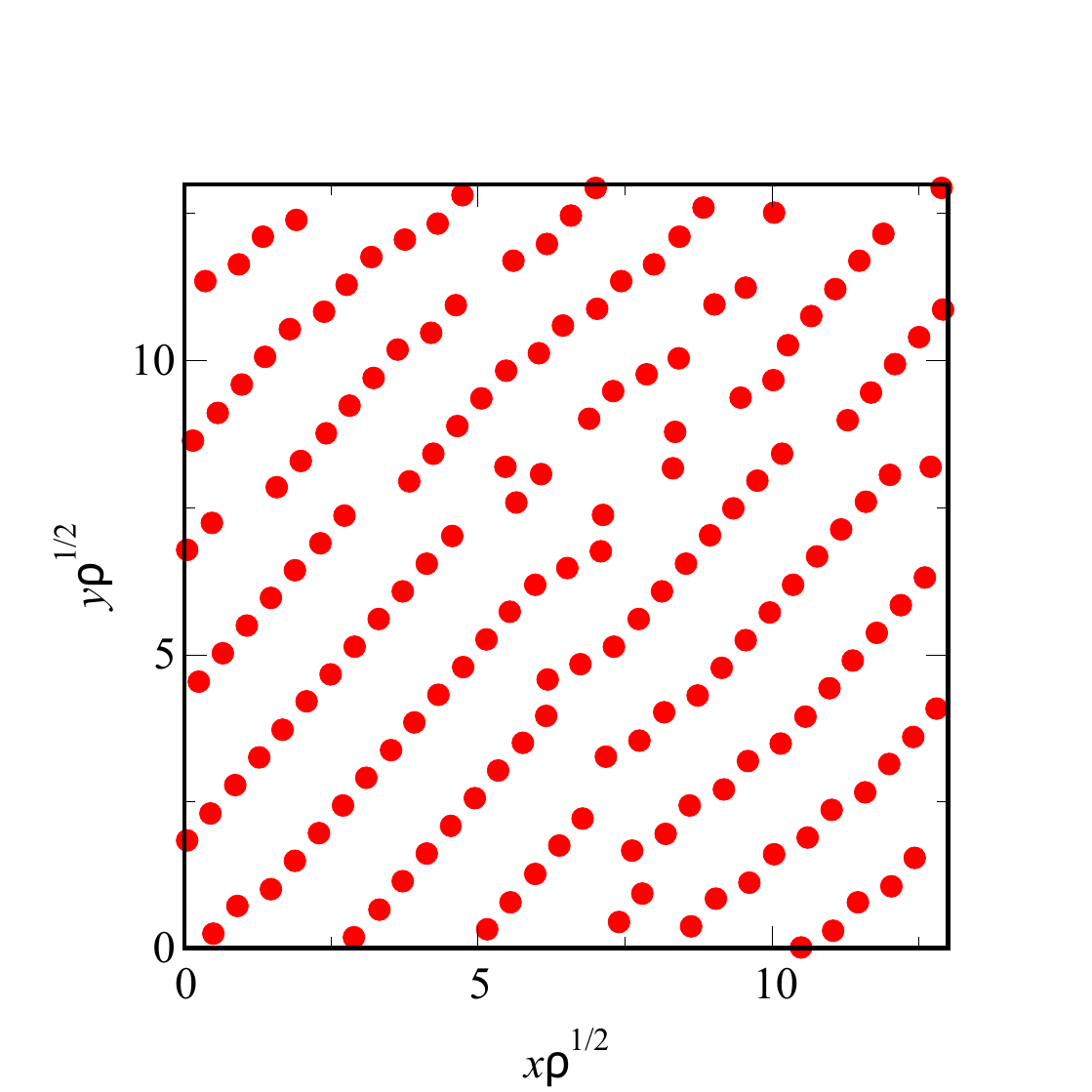}}
    \caption{(a) Pair potential (\ref{pot}) designed to yield degenerate ground states given by the isospectral crystals with hexagonal fundamental cells shown in Fig. \ref{fig:2D}(a), which we have found using the inverse technique in Ref. \cite{Re06a}.
    (b) A low-temperature configuration obtained via simulation annealing under the potential shown in (a) with $N = 168$ particles and cooling rate $\gamma = 0.999$, with $10^5$ sweeps at each temperature.
    (c) Same as (b), but with a faster cooling rate $\gamma = 0.997$. }
    \label{fig:pot}
\end{figure*}

Using the inverse technique described above, we have found an isotropic pair potential of form (\ref{pot}) whose 2D low-temperature configurations obtained via simulated annealing can lead to both of two inequivalent isospectral crystal structures with 3-particle bases, shown in Fig. \ref{fig:2D}(a).
The optimized potential [Fig. \ref{fig:pot}(a)] has a repulsive part on the range $r \leq 5\rho^{-1/2}$, but contains 4 sharp local minima at radial distances corresponding to the first 4 peaks in the targeted $g_2(r)$ shown in Fig. \ref{fig:2D}(b).
The form (\ref{pot}), characterized by soft repulsive interaction on the small to intermediate range, as well as sharp attractive wells at specific pair distances, typically appear systems of DNA-grafted nanoparticles \cite{Ml13, Ku17} or polymers \cite{Ya04}.

Importantly, we find that the proportion of the two inequivalent isospectral configurations can be controlled by the cooling rate $\gamma$ in the simulated annealing procedure.
At a very slow cooling rate $\gamma = 0.999$, the two degenerate ground-state structures, characterized by linear chains and equilateral triangular clusters, occur with approximately equal probability [Fig. \ref{fig:pot}(b)].
However, at a faster cooling rate $\gamma = 0.997$, linear chains occur with a much higher probability than triangular clusters [Fig. \ref{fig:pot}(c)].
Indeed, almost all particles are part of the striped structure. 
The fact that chains are dynamically favored over equilateral triangular clusters is expected, because at positive temperatures, chain-like structures can bend or glide in parallel to one another without a significant energy cost, whereas the triangular clusters form part of a crystal with 3-fold rotational symmetry and cannot undergo large-scale collective motion without significantly increasing the configurational potential energy.
This example suggests the possibility to use inverse techniques \cite{To09a, Co09d, Ma16b} to study how different cooling rates can preferentially lead to different ground states (under the action of isotropic pair potentials) characterized by inequivalent isospectral crystals.
Note that varying the cooling rate has already been a common practice to exert control over crystal polymorphism and morphology \cite{Ni08, Li20b}.

\section{Conclusions and Discussion}
\label{sec:conc}
In this work, we have used numerical and rigorous methods to find $n_{\text{min}}(d)$, the minimum value of $n$ for inequivalent crystals with the same theta function in $\mathbb{R}^d$.
Using a precise numerical algorithm, we have identified inequivalent isospectral 4-, 3-, and 2-particle bases in one, two and three dimensions, respectively. 
In 1D and 3D cases, the values of $n_{\text{min}}(d)$ via our numerical procedures are consistent with rigorous results, demonstrating the reliability and precision of the algorithm. 
In the 2D case, while it is challenging to prove that $n_{\text{min}}(2) > 2$, we have rigorously shown that many 2D inequivalent isospectral 3-particle bases found via the algorithm indeed possess identical theta series up to infinite pair distances.
In summary, we conjecture that
$n_{\text{min}}(d) = 4, 3, 2$ for $d = 1, 2, 3$, respectively.

Our finding that $n_{\text{min}}(d)$ decreases as $d$ increases is consistent with the decorrelation principle \cite{To06b}, which states that spatial correlations that exist for a particular model in lower dimensions diminish as the space dimension becomes larger.
As a result, degeneracy of pair correlation functions becomes increasingly prevalent in higher dimensions.
The decorrelation principle has been shown in various cases of disordered many-body systems \cite{To06b, Wa22b}, and our work shows that it is also relevant for ordered states, i.e., crystals.
It is important to remark that $n_{\text{min}}(4) = 1$ because an example of 4D inequivalent isospectral lattices have been identified by Schiemann \cite{Sc90}.
Thus, it is likely that $n_{\text{min}}(d) = 1$ for all $d \geq 4$, because inequivalent isospectral crystals in higher dimensions can be constructed from layers of crystals in lower dimensions, and isospectrality of the layers isospectrality would imply the isospectrality of the higher-dimensional crystals.

The existence of inequivalent isospectral crystals implies that the ground-state manifold for certain many-body systems under isotropic pair interactions can lead to degenerate crystalline configurations.
Using inverse statistical-mechanical methods, we show that there indeed exist systems under isotropic pair interactions whose low-temperature configurations can lead to both of two isospectral crystals structures, the proportion of which can be controlled via the cooling rate.
Thus, our findings provide general insights into the structural and ground-state degeneracies of crystal structures  as determined by radial pair statistics and radial pair interactions, respectively.

Our work motivates the study of the enumeration and classification of crystal structures with an $n$-particle basis where $n$ is greater than or equal to $n_\text{min}(d)$ in space dimension $d$. 
One could begin to attack this problem by using our algorithm with $n = n_\text{min}(d) + 1, n_\text{min}(d) + 2, \dots$, etc.
In these calculations, the upper cut-off radius $R$ in (\ref{dg2}) should be carefully chosen such that the number of pair distances in $(0, R)$ is much larger than $n_F$, as discussed in Sec. \ref{sec:alg}.
We expect that as $n$ increases beyond $n_{\text{min}}(d)$ for an arbitrary but fixed space dimension $d$, there may exist cases in which the number of mutually inequivalent crystals, $\mathcal{N} \geq 3$, possess identical theta series. 
(Note that $\mathcal{N} = 2$ for all isospectral cases identified in this work, since our algorithm aims to find \textit{pairs} of inequivalent isospectral crystals.)
One could attempt to identify triplets, quadruplets, etc., of inequivalent isospectral crystals by using our algorithm with a generalized pair-distance metric $D_{g_2}(C_1, \dots, C_\mathcal{N})$ that vanishes if and only if $g_2(r)$ for all crystals match on $(0, R)$, as well as a generalized geometric distance metric $\xi(C_1, \dots, C_\mathcal{N})$ that vanishes if and only if any two of the crystals $C_1, \dots, C_\mathcal{N}$ are equivalent.
A promising avenue for future research is to determine the maximum $\mathcal{N}$ for given $d$ and $n \geq n_{\text{min}}(d)$.

Finally, we remark that the existence of inequivalent isospectral crystals provides important perspective regarding the interpretation of crystallographic and diffraction data.
For example, the powder X-ray diffraction pattern of a crystal is equivalent to the angular averaged structure factor $S(k) = 1 + \rho\int_{\mathbb{R}^d} [g_2(r) - 1] d \mathbf{r}$ \cite{El11, Ho19}.
Thus, given the possibility of inequivalent isospectral crystals, the structures corresponding to a given diffraction pattern can be non-unique, and one requires other information than the angular averaged pair function to fully determine the crystal structure.

\section*{Acknowledgements}
This work is supported by the National Science Foundation CBET-2133179.

\section*{Appendix}
\subsection{Theta series for some 2D crystals}
To get an intuitive idea of the information encoded in the theta series (\ref{theta_def}), we explicitly show the first several terms of the theta series for some well-known 2D Bravais lattices ($n = 1$) and non-Bravais lattices (crystals) ($n \geq 2$).
For Bravais lattices, the pair distances are measured in units of a particle's nearest-neighbor distance.
For non-Bravais crystals, the distances are in units of the nearest-neighbor distance of the underlying Bravais lattice.

For the square lattice ($n = 1$) \cite{Co93}:
\begin{equation}
    \Theta(q) = 1 + 4q^1 + 4q^2 + 4q^4 + 8q^5 + \dots.
\end{equation}
Thus, in the square lattice, there are 4 pair displacement vectors with squared norm 1, 4 with squared norm 2, etc.
The theta series of other crystals are similarly interpreted.

For the triangle lattice ($n = 1$) \cite{Co93}:
\begin{equation}
    \Theta(q) = 1 + 6q^1 + 6q^3 + 6q^4 + 12 q^7 + \dots.
\end{equation}

For the honeycomb crystal ($n = 2$) \cite{Co93}:
\begin{equation}
    \Theta(q) = 1 + 3q^{1/3} + 6q^{1} + 3q^{4/3} + 6q^{7/3} + \dots.
\end{equation}

For the Kagom{\'e} crystal ($n = 3$) \cite{Co93}:
\begin{equation}
    \Theta(q) = 1 + 4q^{1/4} + 4q^{3/4} + 6q^{1} + 8q^{7/4} + \dots.
\end{equation}

\subsection{Parametrization of Crystals}
Here, we describe our methods to parametrize 2D and 3D crystals, i.e., to eliminate the $d(d-1)/2 + 1$ degrees of freedom due to rotations and scaling, thereby transforming the vectors $\mathbf{a}_i, \dots, \mathbf{a}_d$ and $\mathbf{p}_2^{(i)}, \dots, \mathbf{p}_n^{(i)}$ into $n_F$ free scalar parameters subject to optimization in our algorithm described in Sec. \ref{sec:alg}.
Table \ref{tab:2dparametrization} shows our parametrization for a 2D crystal, where $c_i$'s are the free parameters subject to optimization.
Similarly, Table \ref{tab:3dparametrization} shows our parametrization for a 3D crystal.
To search for inequivalent isospectral 2D and 3D crystals, we have assumed that the pair of crystals $C_1, C_2$ have identical underlying lattice, and thus they share common values of $c_1, c_2$ if $d = 2$, and common values of $c_1, \dots, c_5$ if $d = 3$.
Table \ref{tab:crystalparams} presents the optimized parameters for the 2D and 3D inequivalent isospectral crystals identified via our algorithm, whose configurations are shown in Figs. \ref{fig:2D_full}, \ref{fig:2D} and \ref{fig:3D}.

\begin{table*}[htp]
    \centering
    \caption{Parametrization method used in this work to transform the basis and particle-position vectors for a 2D crystal into free parameters subject to optimization.}
    \begin{tabular}{||c|p{10cm}|c||}
        Parameter & Definition & Range\\
        \hline
        $c_1$ & $|\mathbf{a}_2|/|\mathbf{a}_1|$ & $[1, 5]$\\
        $c_2$ & $\mathbf{a}_1\cdot\mathbf{a}_2/|\mathbf{a}_1|^2$, the projected length of $\mathbf{a}_2$ on $\mathbf{a}_1$ divided by $|\mathbf{a}_1|$. & $[0, 1/2]$ \\
        $c_{1 + j}, c_{2 + j}$ & Components of $[\mathbf{a}_1, \mathbf{a}_2]^{-1}\mathbf{p}_j$. & $[0, 1)$\\
    \end{tabular}
    \label{tab:2dparametrization}
\end{table*}

\begin{table*}[htp]
    \centering
    \caption{Parametrization method used in this work to transform the basis and particle-position vectors for a 3D crystal into free parameters subject to optimization.}
    \begin{tabular}{||c|p{10cm}|c||}
        Parameter & Definition & Range\\
        \hline
        $c_1$ & $|\mathbf{a}_2|/|\mathbf{a}_1|$ & $(0, 1]$\\
        $c_2$ & $|\mathbf{a}_3|/|\mathbf{a}_2|$ & $(0, 1]$\\
        $c_3$ & Angle between $\mathbf{a}_1$ and $\mathbf{a}_2$, & $[0, \pi/2]$ \\
        $c_4$ & The ratio $c_4'/c_3$, where $c_4'$ is the angle between $\mathbf{a}_1$ and the projected vector of $\mathbf{a}_3$ on the plane containing $\mathbf{a}_1$ and $\mathbf{a}_2$. & [0, 1] \\
        $c_5$ & Angle between $\mathbf{a}_3$ to the plane containing $\mathbf{a}_1$ and $\mathbf{a}_2$. & $[0, \pi/2]$\\
        $c_{4 + j}, c_{5 + j}, c_{6 + j}$ & Components of $[\mathbf{a}_1, \mathbf{a}_2, \mathbf{a}_3]^{-1}\mathbf{p}_j$. & $[0, 1)$\\
    \end{tabular}
    \label{tab:3dparametrization}
\end{table*}

\begin{table*}[htp]
    \centering
    \caption{Optimized parameters for the 2D and 3D inequivalent isospectral crystals identified via our algorithm.}
    \begin{tabular}{||c|c|c|c|c||c|c||}
       Parameter for $d=2$ & Fig. \ref{fig:2D_full}(a) & Fig. \ref{fig:2D}(a) & Fig. \ref{fig:2D}(c) & Fig. \ref{fig:2D}(e) & Parameter for $d=3$ &  Fig. \ref{fig:3D}(a) \\
       \hline
        $c_1$    &  $\left(\sqrt[4]{55} \sqrt{\frac{3 \sqrt{55}}{4}+\frac{3}{4 \sqrt{55}}}\right)/(2 \sqrt{3})$ & 1   & 1 & $\sqrt{3}$ & $c_1$ & 0.9968\\ 
        $c_2$    &  1/4 & 1/2 & 0 & 0 & $c_2$ & 0.8087\\ 
        $c_3^{(1)}$ &  3/4 & 0   & 0 & 1/4 & $c_3$ & 1.264\\ 
        $c_4^{(1)}$ & 1/4 & 1/3 & $1/(1 + \sqrt{3})$ & 1/4 & $c_4$ & 0.4268 \\ 
        $c_5^{(1)}$ & 1/4 & 0   & $1/(2 + 2\sqrt{3})$& 3/4 & $c_5$ & 1.542 \\ 
        $c_6^{(1)}$ & 3/4 & 2/3 & $(1 + \sqrt{3}/2)/(1 + \sqrt{3})$ & 3/4 & $c_6^{(1)}$ & 0.03015\\ 
        $c_3^{(2)}$ & 3/4 & 0 & 0 & 0 & $c_7^{(1)}$ & 0.5288\\ 
        $c_4^{(2)}$ & 1/2 & 1/3 & $\sqrt{3}/(1 + \sqrt{3})$ & 1/2 & $c_8^{(1)}$ & 0.9164\\ 
        $c_5^{(2)}$ & 1/2 & 2/3 & $(1 + \sqrt{3}/2)/(1 + \sqrt{3})$ & 1/2 & $c_6^{(2)}$ & 0.7093\\ 
        $c_6^{(2)}$ & 0 & 1/3 & $(\sqrt{3} + 1/2)/(1 + \sqrt{3})$ & 1/2 & $c_7^{(2)}$ & 0.2096\\ 
        & & & & & $c_8^{(2)}$ & 0.5583\\
    \end{tabular}
    \label{tab:crystalparams}
\end{table*}

\subsection{Parameters of the Isotropic Pair Potential with Degenerate Crystalline Ground States}

Table \ref{tab:potparams} presents the optimized parameters in the isotropic pair potential (\ref{pot}), whose low-temperature configurations can lead to both of two isospectral crystals.

\begin{table*}[htp]
    \centering
    \caption{Optimized parameters in the isotropic pair potential (\ref{pot}), whose low-temperature states contain configurations of both of two isospectral crystals.
    All length parameterss $r_j, \sigma_j$ are made dimensionless in units of $\rho^{-1/2}$, and all energy parameters $\varepsilon_j$ are made dimensionless in units of $k_BT$.}
    \begin{tabular}{||c|c||c|c||c|c||}
       $r_0$ & 0.6146 & $\sigma_2$ & 0.05 & $r_2$ & 0.6146 \\
       $\varepsilon_1$ & 11.00 & $\sigma_3$ & 0.2906 & $r_3$ & 1.507 \\
       $\sigma_1$ & 1.972 & $\sigma_4$ & 0.05 & $r_4$ & 1.249 \\
        $\varepsilon_2, \dots, \varepsilon_6$ & 2 & $\sigma_5$ & 0.05 & $r_5$ & 1.647 \\
        & & $\sigma_6$ & 0.05 & $r_6$ & 1.868 
    \end{tabular}
    \label{tab:potparams}
\end{table*}

\clearpage

\end{document}